\relax
\documentclass[letterpaper]{article} %
\usepackage{aaai21}  %
\usepackage{times}  %
\usepackage{helvet} %
\usepackage{courier}  %
\usepackage[hyphens]{url}  %
\usepackage{graphicx} %
\urlstyle{rm} %
\usepackage{natbib}  %
\usepackage{caption} %
\frenchspacing  %
\setlength{\pdfpagewidth}{8.5in}  %
\setlength{\pdfpageheight}{11in}  %
\pdfinfo{
/Title (AAAI Press Formatting Instructions for Authors Using LaTeX -- A Guide)
/Author (AAAI Press Staff, Pater Patel Schneider, Sunil Issar, J. Scott Penberthy, George Ferguson, Hans Guesgen, Francisco Cruz, Marc Pujol-Gonzalez)
/TemplateVersion (2021.1)
} %

\setcounter{secnumdepth}{2} %

\title{My Publication Title --- Multiple Authors}
\author {
        Runsheng Yu,\textsuperscript{\rm 1}
        Yu Gong, \textsuperscript{\rm 2}
        Xu He, \textsuperscript{\rm 1}
        Bo An,\textsuperscript{\rm 1}
        Yu Zhu
       , \textsuperscript{\rm 2}
        Qingwen Liu,\textsuperscript{\rm 2}
        Wenwu Ou
        \textsuperscript{\rm 2}
        \\
}
\affiliations {
    \textsuperscript{\rm 1} 
School of Computer Science and Engineering, Nanyang Technological University \\
    \textsuperscript{\rm 2} Alibaba Group \\
    runshengyu@gmail.com, gongyu.gy@alibaba-inc.com, hexu0003@e.ntu.edu.sg, boan@ntu.edu.sg,  zy143829@alibaba-inc.com, xiangsheng.lqw@alibaba-inc.com, santong.oww@taobao.com
}
\usepackage[utf8]{inputenc} %
\usepackage{booktabs}       %
\usepackage{amsfonts}       %
\usepackage{nicefrac}       %
\usepackage{microtype}      %

\usepackage{amsmath}
\usepackage[linesnumbered,ruled,vlined]{algorithm2e}
\usepackage{mathrsfs,dsfont}
\usepackage{amsthm,amssymb}
\usepackage{lipsum}
\usepackage{wrapfig}
\usepackage{bm}
\usepackage{float} 

\newtheorem{remark}{Remark}
\newtheorem{lemma}{Lemma}

\newtheorem{proposition}{Proposition}

\SetKwInput{KwInput}{Input}
\SetKwInput{KwOutput}{Output}
\SetKwFunction{KwFn}{Fn} 
\usepackage{subfigure}
\makeatletter
\makeatother

\usepackage{multirow}
\usepackage{bbold}
\usepackage{listings}
\usepackage{color}
\usepackage{pythonhighlight}
\usepackage{lipsum}
\usepackage[switch]{lineno}

\newcommand{\rs}{\textcolor{black} }

\title{Personalized Adaptive Meta Learning for Cold-start User Preference Prediction}

\begin{document}
\maketitle

\begin{abstract} 
A common challenge in personalized user preference prediction is the cold-start problem. Due to the lack of user-item interactions, directly learning from the new users' log data causes serious over-fitting problem. Recently, many existing studies regard the cold-start personalized preference prediction as a few-shot learning problem, where each user is the task and recommended items are the classes, and the gradient-based meta learning method (MAML) is leveraged to address this challenge. However, in real-world application, the users are not uniformly distributed (i.e., different users may have different browsing history, recommended items, and user profiles. We define the major users as the users in the groups with large numbers of users sharing similar user information, and other users are the minor users), existing MAML approaches tend to fit the major users and ignore the minor users. To address this cold-start task-overfitting problem, we propose a novel personalized adaptive meta learning approach to consider both the major and the minor users with three key contributions: 1) We are the first to present a personalized adaptive learning rate meta-learning approach to improve the performance of MAML by focusing on both the major and minor users. 2) To provide better personalized learning rates for each user, we introduce a similarity-based method to find similar users as a reference and a tree-based method to store users' features for fast search. 3) To reduce the memory usage, we design a memory agnostic regularizer to further reduce the space complexity to constant while maintain the performance. Experiments on MovieLens, BookCrossing, and real-world production datasets reveal that our method outperforms the state-of-the-art methods dramatically for both the minor and major users. 
\end{abstract}

\section{Introduction}

\rs{Recommender Systems (RS) help people to discover the items they prefer~\citep{guo2017deepfm,qu2016product}. In order to train a well-performing personalized user preference predictor, enough interactions with users are indispensable. %
To address this challenge, many researchers take advantage of the offline supervised training methods, which leverage the historical data (log data) to train the model. 
However, when the log data is lacking, these  supervised training methods cause the over-fitting issues~\citep{vanschoren2018meta}, which is known as the \emph{cold-start problem}.}

To train a well-performing model in the cold-start problem, meta learning-based approaches are introduced~\citep{lee2019melu,dong2020mamo}. Intuitively, learning with a few samples can be viewed as a few-shot learning problem and meta learning, especially gradient-based meta learning (e.g., Model Agnostic Meta Learning, MAML~\citep{finn2017model}), which aims to adapt to any task with only a few steps of parameter update, has been proven as one of the most successful approaches for these problems. Thanks to its good generalization ability, MAML has already been leveraged into various domains, including computer vision, natural language process, and robotics~\citep{gui2018few,madotto2019personalizing,finn2017one}. Recently, in the RS area, meta learning is introduced for the cold-start problem for either users or items, which treats the users/items as tasks, log data as samples, and learns to do fast adaptation when meeting new tasks~\citep{dong2020mamo, Pan:2019:WUC:3331184.3331268,lee2019melu,luo2020metaselector}. 

\begin{figure}
  \includegraphics[width=\linewidth]{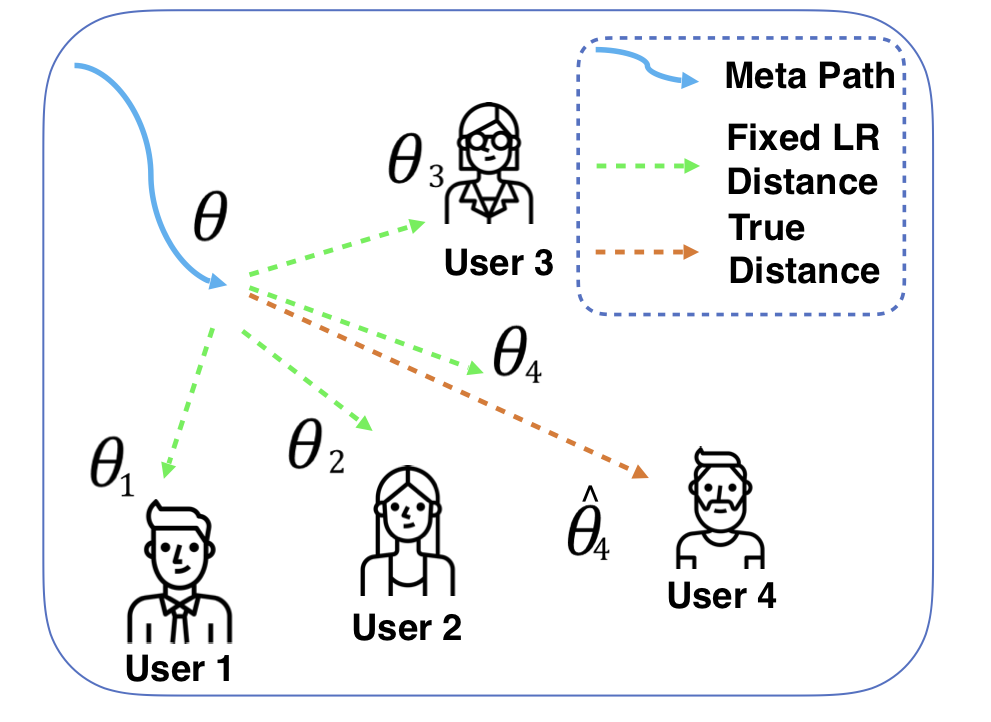}
  \caption{An example illustrating that MAML has limitations in the cold-start user imbalanced dataset. Users $1,2,3$ are the major users and user $4$ is the minor user.} %
  \label{meta_reason}
\end{figure}
\rs{However, these MAML methods assume that the distribution of tasks (users) is uniform. But in the recommendation systems, the user distribution is not always balanced, i.e., the user profile, browsing history, as well as the features of recommended items are not always uniformly distributed\footnote{They are collectively named as the feature values.}, which harms the performance of MAML. For example, regarding the user profile, as shown in Tab.~\ref{ratio}, most values are imbalanced. In terms of the browsing histories, the distribution of the browsing histories is also non-uniform~\citep{Pan:2019:WUC:3331184.3331268}.} Therefore, simply learning from these data will overfit the users with similar feature values since these users own the major similar features and fitting these users can already achieve good-enough \emph{average} prediction performance (like the MSE results in Tab.~\ref{ratio}), which is known as the cold-start user (task) overfitting problem. \rs{To illustrate the problem clearly, we regard a collection of users with similar feature values as a group and define the users as the major users when the number of users in these groups are large, and others users are the minor users.} \rs{Now, we give an example to illustrate how imbalanced distribution harms the performance of MAML: as shown in Fig.~\ref{meta_reason}, assuming that an MAML strategy aims to learn to do fast adaptation to the four cold-start users (\rs{three major users (users 1-3) and one minor user (user 4)}). Different locations in the blue square indicate feature values (embeddings) for different users. Since three of them are near each other (have similar feature values), %
the model may \emph{not} focus on the minor user (user 4) because fitting the major users can already achieve good prediction performance. } %

To address these limitations, in this paper, we propose a personalized adaptive meta learning approach to improve the performance of MAML in the cold-start user preference prediction problem. The main idea is to set different learning rates for different users so as to find task-adaptive parameters for each user. Our main contributions are as follows: 1) We are the first to introduce a novel user-adaptive learning rate based meta learning in RS to improve the performance of MAML by focusing on both the major and minor users. 2) To provide better-personalized learning rates, we introduce a method to find similar users as a reference and a tree-based method to store users' features for fast search.
3) To reduce the memory usage, we introduce a memory agnostic regularizer to reduce the space complexity while maintaining good prediction performance. Experiments on MovieLens-1M, BookCrossing, and real-world production datasets from one of the largest e-commerce platforms reveal that our method outperforms state-of-the-art methods dramatically for both the minor and the major users. 

\section{Related Works}
\begin{table}
{\scriptsize
\begin{tabular}{|c|c|c|c|c|c|c|c|}
\toprule
 \multicolumn{4}{|c|}{\textbf{The ratio of features}} & \multicolumn{2}{c|}{\textbf{The major users}} & \multicolumn{2}{c|}{\textbf{The  minor users}} \\ \hline
Age & Gender & Zipcode & Occup  & Ratio & MSE & Ratio & MSE \\ \midrule
72.4$\%$ & 71.7$\%$ & 56.7$\%$ &  64.0$\%$ & 0.648 & 1.413 & 0.352 &  1.506  \\ 
\bottomrule
\end{tabular}
}
\caption{The ratio \rs{($\frac{\text{the number of users having certain feature values}}{\text{total number of users}}$)} of the top $30\%$ largest number of feature values that users own\footnote{This feature only has two values (male/female) and we use top 50$\%$ instead.}, ratio of certain users ($\frac{\text{the number of users in certain group}}{\text{total number of users}}$), and the Mean Squared Errors (MSE) from one of the MAML methods~\citep{lee2019melu} in MovieLens. The method used to split the cold-start major and minor users can be found in Sec. \ref{exps}.}
\label{ratio}
\end{table}

In this section, we discuss some related works including gradient-based meta learning for the imbalanced dataset and the meta learning for cold-start recommendation.. 

\textbf{Gradient-Based Meta Learning for Task Overfitting Problem.} MAML based methods %
have been widely adopted for studying the few-shot learning problem~\citep{finn2017model,li2017meta,xu2018meta,chen2018federated,ravi2016optimization,lee2018gradient}. %
To consider task-adaptive challenges, the vector of learning rates, the block-diagonal preconditioning matrix, latent embedding, and interleaving warp-layers are designed to learn parameters for different tasks respectively~\citep{li2017meta,park2019meta,rusu2018meta,flennerhag2019meta}. However, those methods still do not explicitly consider the task overfitting problem. To take this problem into consideration, Bayesian
TAML introduces different initiation parameters and inner gradient approaches to handle class imbalance, task imbalance, as well as out-of-distribution tasks together by higher-order statistics~\citep{flennerhag2019meta}. However, in RS, a users' feature space is high-dimensional and those statistics are still hard to represent the embedding completely. \rs{ As far as we know, we are the \emph{first} to propose a novel meta learning method focusing on the two key challenges in cold-start RS: 1) 
comparing with the common-seen few-shot learning problem which does not explicitly have the task information, the dimension of features for the task information (like the user profile)  in our problem is large, and 2) task does not obey the uniform distribution (even long-tail) while the standard few-shot learning problem is assumed uniformly distributed. }

\textbf{Meta Learning for Cold-start Recommendation.} In RS, meta learning is mainly used to do fast adaptation with only few samples for users and items, including the item cold-start/warm-up problem~\citep{vartak2017meta,Pan:2019:WUC:3331184.3331268} and the user cold-start problem~\citep{dong2020mamo,lee2019melu,bharadhwaj2019meta}. For the item cold-start problem, linear weight adaptation is introduced to represent the items by the users who have viewed it~\citep{vartak2017meta} while Meta-Embedding is given to generate different embeddings for cold-start processes~\citep{Pan:2019:WUC:3331184.3331268}. For the user cold-start approaches, MAML is used to address the challenge by learning fast adaptation to new users. \rs{But those approaches do not consider the task overfitting problems. Differently, we are the \emph{first} to investigate the personalized learning rate based meta learning to address the cold-start user overfitting problem in RS, considering both major and minor users together.}

\section{Problem Formulation}

This section discusses the gradient-based meta learning in RS and how user imbalance affects the performance of MAML. Formally, following~\citep{lee2019melu,finn2017model}, the objective of meta learning in RS scenario can be described as:
$
    { \arg\min\nolimits_{\boldsymbol{\theta}} } \left[\frac{1}{|\tau|}\sum\nolimits_{\tau_i\in \tau} L_{\tau_i}\left(\boldsymbol{\theta}_{i}\right)\right] \text { s.t. } \boldsymbol{\theta}_{i}=\boldsymbol{\theta}-\alpha \nabla_{\boldsymbol{\theta}} L_{\tau_{i}}(\boldsymbol{\theta}),
$
 where $i$ is the $i$-th index for user $u_i \in \mathcal{U}$ (or task, since each user is treated as a task in our RS settings), $L_{\tau_{i}}$ is the loss function for user $i$ (e.g., mean squared error) and the item-feedback set $\tau_{i} = \{\langle \text{item}^j_i, \text{score}^j_i\rangle_{j=1}^{\hat{N}}\}$ is the set of items $\langle \text{item}^j_i\rangle_{j=1}^{\hat{N}}$ ( $\hat{N}$ items in total) and their corresponding feedbacks $\langle \text{score}^j_i\rangle_{j=1}^{\hat{N}}$  (e.g., ranking score) by user $u_i$. $L_{\tau}$ is the loss for different tasks and $\tau\supseteq \tau_i$ is the union set of 
the item-feedback subsets for all the users. %
 $\alpha$ is the (inner) Learning Rate (LR).  For simplification, we use $L_{\tau}$  to replace$\frac{1}{|\tau|}\sum\nolimits_{\tau_i\in \tau} L_{\tau_i}$. \rs{Before we discuss the user overfitting problem, we present a special case that helps us to analyze the problem better: users are clustered into two groups with densities $p_1$, $p_2$ ($p_1\geq p_2$). Users' actual preferences are $x_1$, $x_2$ for group $1$ and $2$ respectively. Now, we show that MAML ignores the minor users in this case:}
 \begin{lemma}\label{distribution}
 \rs{Suppose that the loss function is defined as  $L = \sum\nolimits_{i=1}^2p_i(\theta_i- x_i)^2$ , where $x_1,x_2$ are the actual preferences for task $1$ and $2$, and $\theta_i = \theta-\alpha \nabla_{\theta} L_{\tau_{i}}(\theta)$, when $p_1 \geq p_2$, $x_1\geq x_2$, we have  $(\theta^*_1- x_1)^2 \leq (\theta^*_2 - x_2)^2 $, where $\theta^*$ is the optimal parameter $\theta$.}
 \end{lemma}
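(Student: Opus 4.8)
The plan is to exploit the fact that this one-dimensional quadratic setting is fully solvable in closed form, so that the lemma reduces to comparing two squared coefficients. Throughout I take the per-task (inner) loss to be $L_{\tau_i}(\theta) = (\theta - x_i)^2$, the natural specialization of the MSE loss to this toy model; the precise multiplicative constant in front of the gradient is immaterial, as it can be absorbed into $\alpha$.

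First I would write the inner adaptation step explicitly. Since $\nabla_\theta L_{\tau_i}(\theta) = 2(\theta - x_i)$, the adapted parameter is $\theta_i = \theta - 2\alpha(\theta - x_i) = (1-2\alpha)\theta + 2\alpha x_i$, equivalently $\theta_i - x_i = (1-2\alpha)(\theta - x_i)$. Substituting this into the meta-objective gives $L(\theta) = (1-2\alpha)^2\sum_{i=1}^2 p_i(\theta - x_i)^2$, which for $\alpha \neq 1/2$ is a strictly convex quadratic in $\theta$ with unique minimizer $\theta^\ast = \dfrac{p_1 x_1 + p_2 x_2}{p_1+p_2}$ (and with $p_1+p_2=1$ this is simply $p_1x_1 + p_2x_2$).

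Next I would plug $\theta^\ast$ back in. A short computation yields $\theta^\ast - x_1 = \dfrac{p_2}{p_1+p_2}(x_2 - x_1)$ and $\theta^\ast - x_2 = \dfrac{p_1}{p_1+p_2}(x_1 - x_2)$, and therefore, using $\theta^\ast_i - x_i = (1-2\alpha)(\theta^\ast - x_i)$, we get $(\theta^\ast_1 - x_1)^2 = (1-2\alpha)^2 \left(\dfrac{p_2}{p_1+p_2}\right)^2 (x_1-x_2)^2$ and $(\theta^\ast_2 - x_2)^2 = (1-2\alpha)^2 \left(\dfrac{p_1}{p_1+p_2}\right)^2 (x_1-x_2)^2$. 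Since $p_1 \geq p_2 \geq 0$ implies $p_2^2 \leq p_1^2$, the inequality $(\theta^\ast_1 - x_1)^2 \leq (\theta^\ast_2 - x_2)^2$ follows at once.

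There is no real obstacle here; the only things needing care are bookkeeping. I would (i) state clearly which loss plays the role of $L_{\tau_i}$ in the inner step and note the irrelevance of its scaling, and (ii) exclude the degenerate learning rate $\alpha = 1/2$, for which $L \equiv 0$ and every $\theta$ is a minimizer. I would also remark that the hypothesis $x_1 \geq x_2$ is not needed for the squared-residual inequality itself — it only fixes the sign convention for the two groups' preferences — so the conclusion holds purely from $p_1 \geq p_2$, which is exactly the statement that MAML's fitted point sits closer to the major group than to the minor group.
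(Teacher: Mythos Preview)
Your proof is correct and follows essentially the same route as the paper: both compute the closed-form minimizer $\theta^\ast=\dfrac{p_1x_1+p_2x_2}{p_1+p_2}$ of the one-dimensional quadratic meta-objective and then compare the two residuals, which reduces to $p_2^2\le p_1^2$. Your version is slightly cleaner in factoring out the $(1-2\alpha)^2$ scalar up front, and your remarks on the degenerate case $\alpha=1/2$ and on the hypothesis $x_1\ge x_2$ being unnecessary for the squared-residual inequality are correct refinements that the paper omits.
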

\rs{All the proofs can be found in Appendix A. This case indicates that the MAML method tends to optimize the major users preferentially.}

\section{Methodology}

This section presents a practical adaptive Learning Rate (LR)-based meta-learning method to address the user overfitting problem, including  1) why adaptive LR can solve the user overfitting problem and how to apply adaptive LR by end-to-end training, 2) a similarity-based learning approach to improve the performance by considering similar users and an approximated tree-based implementation for fast search, and 3) a memory agnostic regularizer to further reduce space complexity to constant.

\subsection{Adaptive Learning Rate based MAML}

To consider both major and minor users together, we propose an adaptive learning rate approach to design different gradient steps for different users :
\begin{equation}\label{maml}
\small
        {\arg \min\nolimits_{\boldsymbol{\theta}}} \left[\frac{1}{|\tau|}\sum\nolimits_{\tau_i\in \tau} L_{\tau_i}\left(\boldsymbol{\theta}_{i}\right)\right],
        \text { s.t. } \boldsymbol{\theta}_{i}=\boldsymbol{\theta}-\alpha(h_i) \nabla_{\boldsymbol{\theta}} L_{\tau_{i}}(\boldsymbol{\theta}),
\end{equation}
$\!$where $h_i\in \mathbb{H}$ is the $i$-th user's feature embeddings\footnote{\rs{User's feature embeddings can include different user information like the user profile embeddings and user browsing history embeddings.}} (or user embedding for short). The difference between ours and MAML is that the LR in our method is a mapping from the user embedding to a real number rather than a fixed LR. Intuitively, with an adaptive learning rate, the meta agent can fit any user even if it is far from the meta strategy (like the user 4 in Fig.~\ref{meta_reason}). \rs{Here, an analysis is given to illustrate the adaptive learning rate can get better results in user imbalanced dataset:}

\begin{lemma}\label{bound}
Based on Lemma \ref{distribution}, we further defined as $L^\prime =\sum\nolimits_{i=1}^2p_i(\theta^\prime_i- x_i)^2$, $\text { s.t. } {\theta}^\prime_{i}={\theta}-\alpha(h_i) \nabla_{{\theta}} L_{\tau_{i}}({\theta})$, ($x_1,x_2$ are the target values for task $1$ and $2$, where $x_2> x_1$) there exists $\alpha_1$ and $\alpha_2$ that satisfying
\begin{equation*}
(\theta^{*}_2- x_2)^2 \geq (\theta^{*\prime}_2 - x_2)^2 \text{and}\  L^*\geq L^{*\prime},
\end{equation*}
where $L^*$ and $L^{*\prime}$ are the optimal loss functions for standard MAML and adaptive MAML respectively. $\theta^{*\prime}$ is the optimal $\theta^{\prime}$ parameters.
\end{lemma}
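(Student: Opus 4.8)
\emph{Proof proposal.} The plan is to exploit that, in this two-group quadratic toy model, every inner update is affine in the meta-parameter $\theta$, so that both the standard and the adaptive meta-objectives can be minimised in closed form and the two claimed inequalities reduce to elementary comparisons. Writing the per-task loss as $L_{\tau_i}(\theta)=(\theta-x_i)^2$, so that $\nabla_{\theta}L_{\tau_i}(\theta)=2(\theta-x_i)$, the inner step gives $\theta^{\prime}_i-x_i=\gamma_i(\theta-x_i)$ with $\gamma_i:=1-2\alpha(h_i)$; standard MAML is the special case $\gamma_1=\gamma_2=\gamma:=1-2\alpha$. Substituting into the meta-objective turns it into the weighted quadratic $L^{\prime}(\theta)=p_1\gamma_1^{2}(\theta-x_1)^2+p_2\gamma_2^{2}(\theta-x_2)^2$, whose minimiser is the reweighted mean $\theta^{*\prime}=(p_1\gamma_1^{2}x_1+p_2\gamma_2^{2}x_2)/(p_1\gamma_1^{2}+p_2\gamma_2^{2})$.

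First I would record the closed forms needed: $L^{*\prime}=\dfrac{p_1p_2\gamma_1^{2}\gamma_2^{2}}{p_1\gamma_1^{2}+p_2\gamma_2^{2}}(x_1-x_2)^2$ and $(\theta^{*\prime}_2-x_2)^2=\gamma_2^{2}(\theta^{*\prime}-x_2)^2=\dfrac{p_1^{2}\gamma_1^{4}\gamma_2^{2}}{(p_1\gamma_1^{2}+p_2\gamma_2^{2})^2}(x_1-x_2)^2$, and, specialising to $\gamma_1=\gamma_2=\gamma$ and using $p_1+p_2=1$, the standard-MAML values $L^{*}=p_1p_2\gamma^{2}(x_1-x_2)^2$ and $(\theta^{*}_2-x_2)^2=p_1^{2}\gamma^{2}(x_1-x_2)^2$. (This last pair already re-proves Lemma~\ref{distribution}: $p_1\ge p_2$ gives $(\theta^{*}_1-x_1)^2=p_2^{2}\gamma^{2}(x_1-x_2)^2\le(\theta^{*}_2-x_2)^2$.)

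Next I would exhibit the witnesses $\alpha_1,\alpha_2$. The quickest choice leaves the major group's rate unchanged, $\alpha_1=\alpha$, and takes $\alpha_2=\tfrac12$, so $\gamma_2=0$: then $L^{\prime}(\theta)=p_1\gamma^{2}(\theta-x_1)^2$, hence $\theta^{*\prime}=x_1$, $L^{*\prime}=0\le L^{*}$, and $(\theta^{*\prime}_2-x_2)^2=0\le(\theta^{*}_2-x_2)^2$ --- exactly the claim, strictly whenever $\gamma\ne0$, $p_1p_2>0$, $x_1\ne x_2$. For a non-degenerate witness, take $\alpha_1=\alpha$ and any $\alpha_2$ with $\gamma_2^{2}<\gamma^{2}$: then $L^{*\prime}/L^{*}=\gamma_2^{2}/(p_1\gamma^{2}+p_2\gamma_2^{2})$ is strictly increasing in $\gamma_2^{2}$, hence $<1$; and $(\theta^{*\prime}_2-x_2)^2/(\theta^{*}_2-x_2)^2=\gamma^{2}\gamma_2^{2}/(p_1\gamma^{2}+p_2\gamma_2^{2})^2$ is increasing in $\gamma_2^{2}$ on $[0,\gamma^{2}]$ because $p_1\ge p_2$, hence also $<1$.

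The only real obstacle is that last pair of monotonicity checks: I must be sure a single downward move of $\alpha_2$ shrinks the minor-user error \emph{and} the total loss at once, i.e. that the two targets of the lemma are not in tension. The computation above settles this, and it is exactly where the imbalance hypothesis $p_1\ge p_2$ is used (it is not needed for $L^{*\prime}/L^{*}$, but it is for the error ratio); with the extreme witness $\alpha_2=\tfrac12$ the obstacle disappears since both quantities become $0$. One bookkeeping point to state explicitly: ``optimal loss'' here is the minimum over $\theta$ for the given fixed learning rate (standard MAML is not also optimised over $\alpha$), so that $L^{*}>0$ generically and the inequality $L^{*}\ge L^{*\prime}$ is non-vacuous.
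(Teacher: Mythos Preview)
Your argument is correct and follows the same overall strategy as the paper: reduce the inner step to an affine map $\theta_i'-x_i=\gamma_i(\theta-x_i)$, observe that the meta-objective becomes a two-point weighted quadratic with weights $p_i\gamma_i^2$, write down the closed-form minimiser, and then exhibit explicit learning rates. The paper's derivation is essentially identical at this level (modulo a sign convention that produces factors $(1+2\alpha_i)$ rather than your $(1-2\alpha_i)$; since only squares enter, this is immaterial).

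Where you differ is in the choice of witness. The paper keeps $\alpha_1=\alpha$ and picks $\alpha_2$ so that the effective weights equalise, i.e.\ $p_2(1+2\alpha_2)^2=p_1(1+2\alpha)^2$, which places $\theta^{*\prime}$ at the midpoint $(x_1+x_2)/2$ and then verifies the two inequalities directly. You instead first give the extreme witness $\alpha_2=\tfrac12$ (so $\gamma_2=0$), which drives both the minor-user error and the total loss to zero and makes the lemma immediate, and then supplement this with a monotonicity analysis showing that \emph{any} $\alpha_2$ with $\gamma_2^2<\gamma^2$ works. Your approach buys a cleaner existence proof and, more usefully, isolates exactly where the imbalance hypothesis $p_1\ge p_2$ is needed (only for the minor-user error ratio, not for $L^{*\prime}/L^*$). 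The paper's single witness is less transparent on this point but avoids the degenerate case where the adaptive objective ignores task~2 entirely; since you also supply the non-degenerate family, nothing is lost.
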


 \rs{Lemma \ref{bound} indicates that learning a personalized adaptive learning rate provides a lower loss value for minor users than the standard MAML method, and is able to achieve a lower total loss value than standard method ($L^*\geq L^{*\prime}$). This coincides with our conjecture that the adaptive LR meta learning methods perform better than the fixed LR methods when facing an imbalanced distribution.}

\rs{Therefore, the main challenge is to capture the relationship among the user distribution,  the parameters of network, and user own learning rate.} To construct it, we leverage the end-to-end data-driven learning approach to map each user's features into different high-dimensional embedding and obtain the individual learning rate adaptively. Based on this idea, we propose the Personalized Adaptive Meta Learning (PAML) method, a direct way to train the network by taking the gradient descent of Eq. (\ref{maml}):
\begin{align*}
       & \boldsymbol{\theta} = \boldsymbol{\theta} - \beta \sum\nolimits_{\tau_i}\nabla_{\boldsymbol{\theta}}L_{\tau_i}(\boldsymbol{\theta}_i)\Bigg( I - \alpha(h_i;\boldsymbol{\psi})\nabla^2_{\boldsymbol{\theta}} L_{\tau_{i}}(\boldsymbol{\theta})\Bigg), \\ &
   \boldsymbol{\psi} = \boldsymbol{\psi}  + \beta\nabla_{\boldsymbol{\psi}} L_{\tau}(\boldsymbol{\theta}) \left(\sum\nolimits_{\tau_i}\nabla_{\boldsymbol{\psi                 }}\alpha(h_i;\boldsymbol{\psi})L_{\tau_{i}}(\boldsymbol{\theta})\right) .
\end{align*}
 where $\boldsymbol{\theta}$ and $\boldsymbol{\psi}$ are the parameters for the model and learning rate\footnote{For simplification, we use $\alpha(h_i)$ to replace  $\alpha(h_i;\boldsymbol{\psi})$ if not specially mentioned.}. $\beta$ is the outer learning rate. The two equations above are drawn by the chain rule. %

\subsection{Approximated Tree-based PAML}\label{trees}

The key for PAML is to get accurate personalized $\alpha(h)$ for different users. However, directly using Fully-Connected (FC) layers to learn $\alpha(h_i)$ is hard since LRs are related to the task distribution and FC layers are not capable to memorize a large number of the users. To address the challenge, one straightforward idea is to consider other similar users' features as a reference since users with similar feature embeddings share similar LRs. Based on this idea, we introduce the similarity-based method to find users with similar high-level features (the embedding) and interests.

Formally speaking, when a new user $u_i$ with embedding $h_i\in \mathbb{H}$ comes, our goal is to find the users with the most similar embeddings to $u_i$. Here, we define a similarity function $s_k:\mathbb{R}^m\times \mathbb{R}^m \rightarrow \mathbb{R}$ to estimate the embedding similarity between user $u_i$ and user $u_k$.  After computing top-$K$ nearest (the most similar) users with the values of similarity, these values can be treated as a reference to obtain a personalized adaptive learning rate. Thus, we can get an LR function for a new user $u_i$: $\alpha(h_i) := \alpha^{\prime}(h_i) + \tilde{\alpha}$, by considering the user information and the user similarities together, where  $\tilde{\alpha} = \sum_k \frac{s_k}{\sum_j s_j + \sigma} \alpha(h_k)$ is the weighted average sum of the nearest users' LRs, $s_k$ is the $k$-th existing nearest user's similarity value to user $i$, $\alpha^{\prime}(h_i) $ is neural network modules mapping from user embedding to a real value, and $\sigma$ is a small value ($10^{-5}$). Here, we set the Gaussian kernel function $s_k=\exp \left(-\delta\left\|h_i-h_k\right\|^{2}\right)$ as the similarity function, where $\delta$ is a constant and $\|\cdot\|$ is the 2-norm. Due to the limited space, details about how the kernel-based function can be used to estimate the similarity can be found in Appendix B. 

Since we need to find the top-$K$ nearest users in every training step, it is necessary to find a fast searching approach to accelerate the training process. %
Here, we leverage the kd-tree~\citep{muja2014scalable} as the basic structure to store users' embeddings. Specifically,  we first initialize the tree structure with several users' ($\geq K$) embeddings and its corresponding LRs (the warm-up stage). Then, when a new user $u_i$ with embedding $h_i$ comes, we search the $K$-th nearest users in the tree. After that, we add the new user embedding and its LR into that tree and rebuild that tree. We remove the least frequent used user's embedding and its LR when the tree-size is larger than the memory size we set manually. Based on this structure, the time complexity of searching each user can be reduced from $\mathcal{O} (n)$ (brute force method) to $\mathcal{O} (\sqrt{n})$ for two-dimension embedding cases~\citep{yianilos1993data}. %

However, this structure induces bias because the user embedding layers are dynamically updated during the training process but the previous users' embeddings stored in the tree nodes are fixed, causing the new embeddings and the old embeddings unfitted. In order to coordinate them, we also let the tree nodes be dynamically updated by gradient descent: $\text{node}_{j} = \text{node}_{j} - \beta\nabla_{\text{node}_{j}} L_{\tau}$, where $\text{node}_{j}$ indicates the $j$-th node value (the user's embedding).  The tree-based structure will also be updated once the embeddings change. The learning-rates stored in nodes are also updated similarly.

To further reduce space usage and speed up, we use the approximated store and search methods in place of the exact (precious) ones. That is, we can take advantage of the randomized kd-tree algorithm~\citep{muja2014scalable} to achieve the approximation. These structures can be easily implemented by the open-source python package (e.g., Pyflann~\citep{muja2013flann}). We name it as Approximated-Tree PAML (AT-PAML). The loss function is the same as Eq.~\eqref{maml} and the pseudo-code can be found in Algs. \ref{alg1} and \ref{alg2}.

\begin{algorithm}[t]
   \caption{Personalized Adaptive Meta Learning}\label{alg1}
\KwInput{User distribution $p(\mathcal{U})$, the learning rate $\beta$.}
 Initialize the meta-policy with parameters $\theta$\;
 \For{episode $T$}{
   sample $N$ of users from $p(\mathcal{U})$\;
  \For{each user $i \in U$}{Split the support set $\tau_i$ and query set $\hat{\tau}_i$ randomly\;
  Extract user embedding $h_i$\;
  \uIf{\emph{Approximated Tree-based method}}{
     \uIf{Warm-up stage}{
    Set LR as $\alpha$\;
    Store$\_$Node($h_i$, $\alpha$)\;
  }
  \Else{ $\langle h_k \rangle^K_{k=1}, \langle \alpha(h_k) \rangle^K_{k=1}$ = Search$\_$Tree($h_i$)\;
     $s :=\{s_k\}^K_{k=1}= \{\langle \phi(h_i),  \phi(h_k) \rangle^{2}_{\mathcal{H}}\}^K_{k=1}$\;
      Obtain $\alpha(h_i) = \sum_k \frac{s_k}{\sum_j s_j + \sigma} \alpha(h_k) + \alpha^{\prime}(h_i)$\;
    Store$\_$Node($h_i$, $\alpha(h_i)$)\;}
  } \ElseIf{Regularized-based method}{Obtain the learning rate $\alpha$ directly by $\alpha(h_i)$;}
    Update 
   $\theta^{i} \leftarrow \theta -  \alpha(h_i)\nabla_\theta \mathcal{L}_{\tau_{i}}(;\theta)$ with support set $\tau_i$\;
  }
   
   \uIf{Regularized-based method}{update 
   $\theta$ by Eq. (4) with query set $\bm\hat{\tau}= \langle\hat{\tau}_i\rangle_{i=1}^N$\;}   
   \ElseIf{\emph{Approximated Tree-based method}}{update 
   $\theta$ by Eqs.~(5) $\&$ (6) with query set $\bm\hat{\tau}= \langle\hat{\tau}_i\rangle_{i=1}^N$. (Eqs. (4), (5) and (6) are shown in Appendix C).}   }
 \KwOutput{a well-trained meta-strategy.}
\end{algorithm}

\begin{algorithm}
  \caption{Store and Search approaches }\label{alg2}
  \SetKwProg{Fn}{Def}{:}{}
    \SetKwFunction{FS}{Store$\_$Node}
\Fn{\FS{$h_i$, $\alpha$}}{  \If{Memory is full}{
  Remove the embeddings and LR that are least recently used\;
  }
    Store the embeddings, LR, and build the standard kd-tree by embedding~\citep{friedman1976algorithm}\;
  \KwRet the kd-tree\;
  }
  
  \SetKwFunction{FSub}{Search$\_$Tree}
  \Fn{\FSub{$h_i$}}{
Search the nearest neighbours~\citep{yianilos1993data} and their corresponding LRs\;
\KwRet  the nearest neighbours and their corresponding LRs \;
  }
\end{algorithm}

\begin{figure*}%
  \centering
  \centerline{		\includegraphics[width=0.75\linewidth]{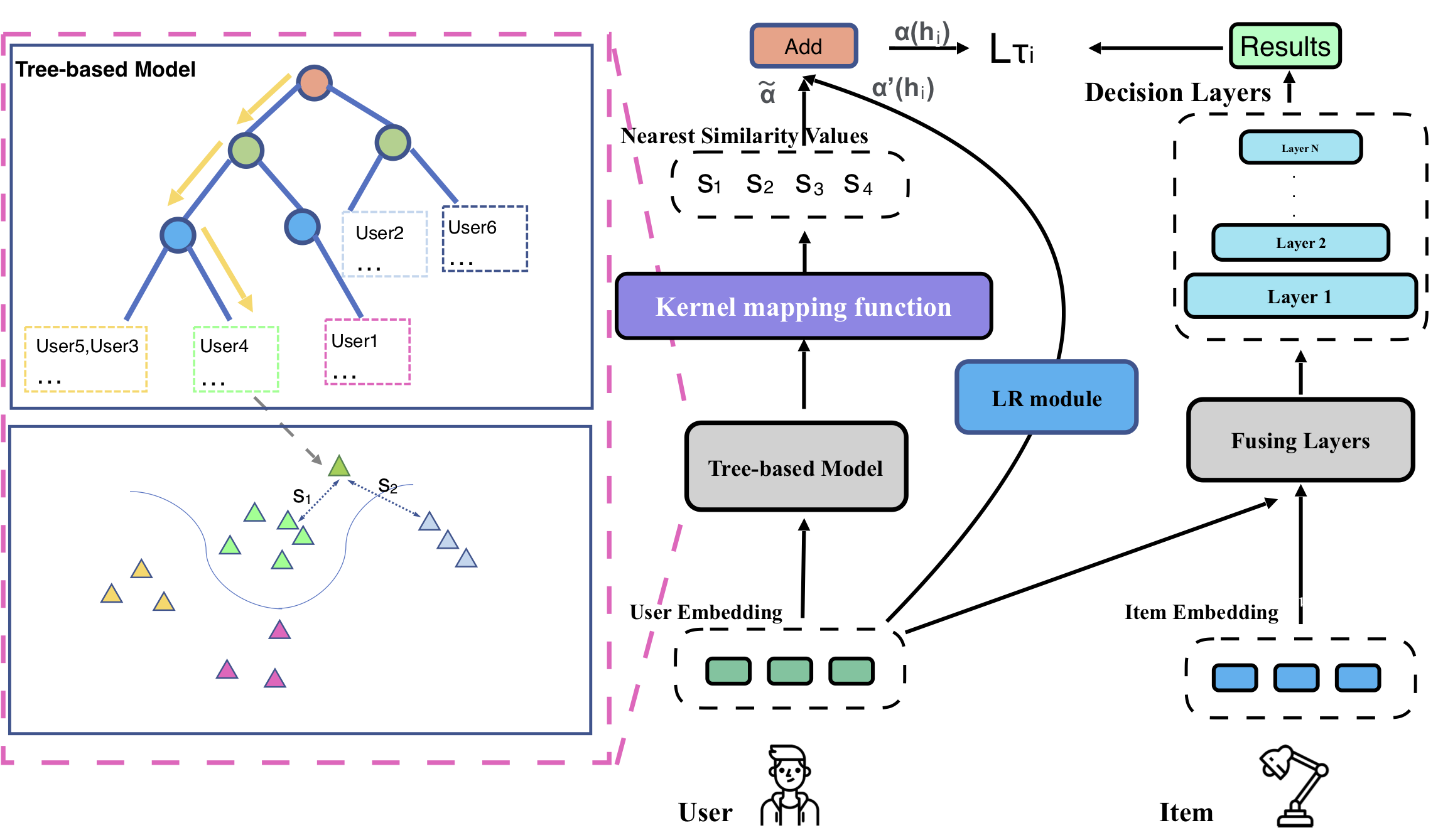}}
\caption{The network structure of the tree-based PAML. The right part is the network for inner update while the left part is the tree-based module. Specifically, \emph{Fusing Layer} indicates the concatenation of user and item embeddings, \emph{Add} is an adding function: $\alpha^{\prime}(h_i) + \tilde{\alpha}$. $Layer 1, \dots N$ means fully connected layers and $S_1, \dots S_4$ are similarity values. } \label{sys}
\end{figure*}

\subsection{Regularizer-based PAML }\label{mar}
However, for large-scale real-world applications (e.g., Netflix) which involve more than millions of users, even the linear space complexity methods (the space size is proportional to the number of user embeddings) to store user embedding is unacceptable. Thus, a constant space complexity algorithm needs to be proposed. %
A widely adopted method to achieve long-term memorization is the continual learning approach~\citep{aljundi2018memory,kirkpatrick2017overcoming}, which designs losses and training schemes to memorize the previous tasks. Inspired by their works, we design an auxiliary loss function to help the model to remember the users it has seen. That is, for any two users $u_i$ and $u_j$, memorizing the information of those two users means the loss values for those users are similar: i.e., $  \| L_{\tau_i}\left(\boldsymbol{\theta}_{i}\right) -     L_{\tau_j}\left(\boldsymbol{\theta}_{j}\right)\|$. Similarly, the multi-user loss is $\sum\nolimits_{u_i,u_j\in\mathcal{U};i>j}\|L_{\tau_i}\left(\boldsymbol{\theta}_{i}\right) -     L_{\tau_j}\left(\boldsymbol{\theta}_{j}\right)\|$. It is easy to understand because forgetting a task means its loss value for that task is higher than the loss value of other tasks. If the model can obtain qualified and similar loss values for each user, we can guarantee that the model has memorized all existing users' information. 

However, it is hard to implement it into practice since computing multi-user loss needs to calculate all the user-pair losses together. To address this challenge, we turn to optimize its upper bound:
\begin{small}
\begin{align*}
      \sum\nolimits_{u_i,u_j\in\mathcal{U};i>j}\|L_{\tau_i}(\boldsymbol{\theta}_{i} ) -     L_{\tau_j}\left(\boldsymbol{\theta}_{j}\right)\| \leq \\ \sum\nolimits_i  (|\mathcal{U}|-1)||\frac{\partial L_{\tau_i}}{\partial\boldsymbol{\theta}}||^2|| \alpha(h_i)|| + C,
\end{align*}
\end{small}
\noindent where $C$ is a constant. \emph{The proof of the upper bound can be found in Appendix~A.2}. Notice that the sum of $||\frac{\partial L_{\tau_i}}{\partial\boldsymbol{\theta}}||^2|| \alpha(h_i)||$ can be divided into each task (user) loss as a regularizer to enable the network to remember each user information: i.e., for user $i$, the regularizer $L^r_i$ is set as:
$L^r_i = ||\frac{\partial L_{\tau_i}}{\partial\boldsymbol{\theta}}||^2|| \alpha(h_i)||$. Intuitively, when $L^r_i = 0$, it indicates that the meta strategy has already obtained a good score for user $i$ without training. Therefore, this term reinforces the model to remember the users it meets. Moreover, since $|\mathcal{U}|-1$ is a constant, we can replace it with a positive real number $\gamma$  to balance $L^r_{\tau_i}$ and $L_{\tau_i}$. We name the PAML with this REGularizer as \emph{REG-PAML}.  Since we do not need any extra space to store user embedding, the space complexity is \emph{constant}. The total loss function for REG-PAML is $   \mathcal{L} = \sum\nolimits_i L_{\tau_i} +  \gamma L^r_i$. The relationship to implicit MAML~\citep{rajeswaran2019meta} can be found in Appendix~C. The pseudo-code can also be found in Alg.~\ref{alg1}.

\section{Empirical Studies}\label{exps}
This section validates our methods in various environments and tasks, including the rating prediction (in the MovieLens-1M dataset and the BookCrossing dataset) as well as the CTR prediction (the real-world production dataset). We also analyze the performance of our methods for both the minor and major users and conduct ablation studies. %

\subsection{Experimental Setup}

\textbf{Datasets.} We use both open-source datasets as well as real-world production dataset to evaluate the performance of our methods in user imbalanced dataset, including MovieLens-1M~\citep{harper2015movielens}, BookCrossing~\citep{ziegler2005improving} as well as production dataset (collected from Taobao e-commerce platforms, which is somewhat similar to~\citep{zhao2018impression,guo2019securing}). The data pre-processing scheme is:  1) We rank the users with the number of their log data and choose the $80\%$ users least log data as the cold-start users (the value $80\%$ is based on the Pareto Principle). 2) We randomly split the users by $7:1:2$ for training, validation, and testing. 3) We tick out the users when they either have blank or wrong features (including the ages are less than 10 or larger than 100 as well as the location features with garble) or the items they viewed are less than a threshold (two items for rating prediction and two clicked items for CTR prediction). 4) We separate the support and query sets for each user with a ratio of 80$\%$:20$\%$ randomly. Details about the features we use, and the statistics can be found in  Appendix D.1 $\&$ D.3. %

\rs{\textbf{Major and minor users separating.} To testify whether our methods provide better prediction results for both minor users and major users, we need to define a criteria to split the major and minor users. However, as we mentioned above, finding the minor users are hard because there is not an exact method to precisely cluster the users by their features. Here we define a simple but not completely precise rule to \emph{approximately} split the minor and major users: since distribution of the feature embeddings is the key to classify whether a user is minor or major user, we split the minor and major users by the following criteria: if the user with more than two of features which are in the top $30\%$ largest number of feature values set, we regard it as the major user. Otherwise, it is the minor user. The major and minor users are $65\%$ and $35\%$ in MovieLens, $72\%$ and $28\%$ in BookCrossing, and $71\%$ and $29\%$ in the Taobao dataset. }

\textbf{Baselines.} To validate our methods, we compare them with several State-Of-The-Art (SOTA) methods, including MeLU~\citep{lee2019melu}, Meta-SGD~\citep{li2017meta}, and transfer learning~\citep{tan2018survey}. 1) MeLU~\citep{lee2019melu}. MeLU is one o f the state-of-the-art gradient-based meta-learning methods in RS. Since it is very similar to other user-based meta learning~\citep{finn2017model,bharadhwaj2019meta}, we use it as the standard gradient-based meta-learning method. 2) Meta-SGD~\citep{li2017meta}. Meta-SGD is an adaptive learning rate meta-learning method. Comparing with standard MAML using a fixed (inner) learning rate, it uses different learning rates for each parameter. But those adaptive parameters do not rely on the user embedding. We use it as a baseline to testify the performance of our embedding-based adaptive LR methods. 3) Transfer-learning~\citep{tan2018survey}. We choose the standard transfer-learning approach which has been widely adopted to address the user imbalance issue as the baseline. Specifically, we train the model with all users in training data and fine-tune the trained model in each user support set in the test data.

\textbf{Network structures.} For the AT-PAML, as shown in Fig.~\ref{sys}, it contains five modules, the embedding modules (including the embedding layers for each feature respectively, we use an FC layer to concatenate them together), the tree-based module, the learning-rate module, the decision module as well as the fusing module. The embedding modules are the standard embedding layers~\citep{pennington2014glove}. The tree-based module is what we discussed in Sec.~\ref{trees}. The LR module is two FC layers with ReLU as activation function. The decision module (except for the last layer) is based on the FC layer with ReLU as activation function. For the CTR prediction, the last layer is a two-dimension softmax layer. For the rating prediction, it is a real value without any activation function. The fusing layer is to concatenate all user and item embeddings together. More details can be found in Appendix~D.2. For the REG-PAML and PAML, except that they do not have the tree module, other modules are similar. 

\textbf{Our methods.} We validate our proposed methods, including the \emph{Approximated Tree-based PAML}(AT-PAML) as well as \emph{Regularizer PAML} (REG-PAML). For ablations, we implement vanilla PAML and REG-PAML with different $\gamma$ values (we use $\gamma=10^{-3}$ as default gamma value). %
We use Pytorch\footnote{\url{https://pytorch.org/}} as the deep learning library and we use the ADAM as the optimizer~\citep{kingma2014adam}. All the experiments are done on a single GeForce GTX 1080 Ti. The production dataset is obtained by the company's recommendation pipeline. The package for the two-tailed student's t-test is from\footnote{\url{https://docs.scipy.org/doc/scipy/reference/generated/scipy.stats.ttest_ind.html}}. he running time for REG-PAML and PAML are about 2 hours, and for AT-PAML is about 4 hours on a single GTX 1080 GPU for MovieLens. For memory space, the space size for AT-PAML is 2.12GB, while that of the REG-PAML/ PAML is about 669MB in MovieLens.  %

\textbf{Evaluation metrics.} Following~\citep{lee2019melu,ren2019lifelong}, we leverage different evaluation metrics to testify different aspects of our methods, including Mean Squared Error (MSE) and average Normalized Discounted Cumulative Gain (NDCG) for rating prediction results; as well as Area Under the Curve (AUC) and weighted Negative Entropy Loss (NEL) for CTR prediction results. \rs{To evaluate the performance difference between the minor and major users, we use the two-tailed student's t-test for statistical hypothesis test. The null hypothesis is that 2 independent samples (minor users and major users) have identical average (expected) values. If the p-value is high, indicating that the expected values of the minor users and major users are not different. On the contrary, when the p-value is low, it shows that the expected values of the minor users and major users are different. Therefore, when a method focuses on both minor and major users, it does \textbf{not} have a low p-value.} More detailed definitions can be found in Appendix~D.6. 

\begin{table*}[t]
    \centering
    \small
    \captionsetup{justification=centering}
{
\begin{tabular}{c|c|c|c|c|c|c}

\toprule
\multirow{2}{*}{} & \multicolumn{3}{c|}{\textbf{MovieLens-1M}} & \multicolumn{3}{c}{\textbf{BookCrossing}} \\ \cline{2-7}
 & \textbf{ MSE $\downarrow$} & \textbf{Avg nDCG@3 $\uparrow$} & \textbf{Avg nDCG@5 $\uparrow$} &\textbf{ MSE $\downarrow$} & \textbf{Avg nDCG@3 $\uparrow$} & \textbf{Avg nDCG@5 $\uparrow$} \\ \midrule
\scriptsize{MeLU}~\citep{lee2019melu} & 1.451\scriptsize{$\pm$0.022} & \textbf{0.793\scriptsize{$\pm$ 0.002}} & 0.800\scriptsize{$\pm$ 0.001} &\emph{4.019} \scriptsize{$\pm$0.101} &0.864\scriptsize{$\pm$0.002} &  0.914\scriptsize{$\pm$0.002} \\ 
\scriptsize{Meta-SGD}~\citep{li2017meta} & 1.340\scriptsize{$\pm$ 0.054} & 0.773\scriptsize{$\pm$ 0.007} & 0.794\scriptsize{$\pm$ 0.007} & 5.197\scriptsize{$\pm$ 0.089} & \emph{0.867\scriptsize{$\pm$ 0.001}} &\emph{0.921\scriptsize{$\pm$ 0.002}}\\ 
\scriptsize{Transfer-Learning}~\citep{tan2018survey} & \emph{1.308\scriptsize{$\pm$0.016}} & 0.778\scriptsize{$\pm$ 0.001}& 0.796\scriptsize{$\pm$ 0.002} & 4.522\scriptsize{$\pm$0.376}  & 0.862\scriptsize{$\pm$0.010} &  0.919\scriptsize{$\pm$0.005}\\
\hline
\scriptsize{\textbf{AT-PAML (ours)}} & 1.322\scriptsize{$\pm$0.007} &  \emph{0.788\scriptsize{$\pm$0.003}} & \textbf{0.806$^*$\scriptsize{$\pm$0.006}} & \emph{3.991\scriptsize{$\pm$0.114}}  &0.852\scriptsize{$\pm$0.001} &  \textbf{0.930$^*$\scriptsize{$\pm$0.002}}\\
\scriptsize{\textbf{REG-PAML (ours)}} & \textbf{1.210$^*$\scriptsize{$\pm$ 0.029}} & 0.779\scriptsize{$\pm$ 0.002}& \emph{0.803\scriptsize{$\pm$ 0.001}} &\textbf{3.928$^*$\scriptsize{$\pm$ 0.176}}
& \textbf{0.868\scriptsize{$\pm$ 0.001}} &0.917$\pm$ 0.001\\
\bottomrule
\end{tabular}
}    \caption{%
    Comparison of different methods on the MovieLens and the BookCrossing datasets. The best results are highlighted in \textbf{bold} and the second-best results are in \emph{italic}. \textbf{Avg} means average. The mean and standard deviation are reported by 3 independent trials. $*$ denotes statistically significant improvement over the best baseline method (measured by t-test with p-value$<0.05$).
    }
    \label{rat_table}
\end{table*}

\begin{table}
    \centering
    \scriptsize
    \captionsetup{justification=centering}
{
\begin{tabular}{|c|c|c|c|}
\toprule
\multirow{2}{*}{} & \multicolumn{3}{c|}{\textbf{MovieLens}} \\ \cline{2-4}
 & \textbf{MSE} & \textbf{Avg nDCG@3} & \textbf{Avg nDCG@5}  \\ \midrule
 \textbf{REG-PAML ($\gamma=10^{-5}$)} & {1.509\scriptsize{$\pm$ 0.029}} & 0.773\scriptsize{$\pm$ 0.004}& {0.799\scriptsize{$\pm$ 0.003}} \\
\textbf{REG-PAML ($\gamma=10^{-4}$) } & {1.227\scriptsize{$\pm$ 0.026}} & 0.779\scriptsize{$\pm$ 0.003}& \emph{0.803\scriptsize{$\pm$ 0.002}} \\
\textbf{REG-PAML ($\gamma=10^{-3}$)} & \emph{1.210\scriptsize{$\pm$ 0.013}} & \textbf{0.791*\scriptsize{$\pm$ 0.002}}& \textbf{0.814*\scriptsize{$\pm$ 0.004}} \\
\textbf{REG-PAML ($\gamma=10^{-2}$)} & \textbf{1.191*\scriptsize{$\pm$ 0.021}} & 0.777\scriptsize{$\pm$ 0.002}& {0.803\scriptsize{$\pm$ 0.004}} \\
\textbf{PAML} & {1.520\scriptsize{$\pm$ 0.034}} & 0.787\scriptsize{$\pm$ 0.001}& \emph{0.809\scriptsize{$\pm$ 0.004}} \\
\textbf{AT-PAML } & 1.322\scriptsize{$\pm$0.007} &  \emph{0.788\scriptsize{$\pm$0.003}} & {0.806\scriptsize{$\pm$0.006}} \\
\bottomrule
\end{tabular}
}
\caption{%
    Ablations of different methods on MovieLens-1M. The best results are highlighted in \textbf{bold} and the second-best results are in \emph{italic}. \textbf{Avg} means average. The results are reported by 3 independent trials. $*$ denotes statistically significant improvement over PAML (measured by t-test with p-value$<$0.05).
    }
        \label{abs_three}
\end{table}

\begin{figure*}[tb!]
\subfigure[The AUC for different methods in test dataset. TL is the abbreviation for transfer learning. Higher is better.]{
\includegraphics[width=0.3\textwidth]{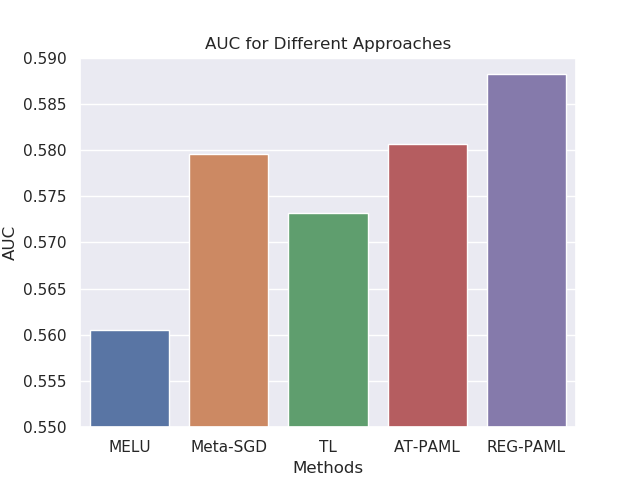}
\label{tb_auc}}\hfill
\subfigure[The box plot of the user NEL in test dataset.]{
\includegraphics[width=0.3\textwidth]{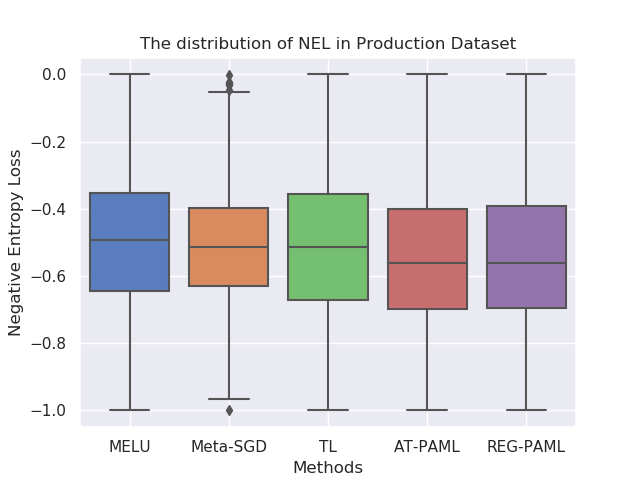}
\label{tb_all}}\hfill
\subfigure[The NEL for different methods in the production dataset. Lower is better.]{
\includegraphics[width=0.3\textwidth]{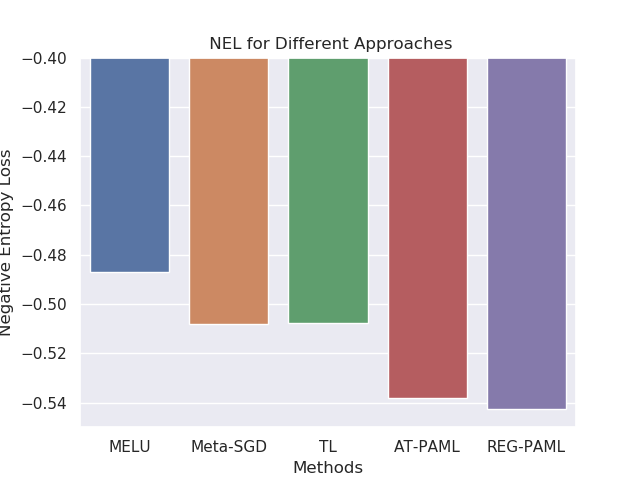}
\label{tb_nel}}\hfill
\caption{The performance and statistics of different methods in the production dataset. The results are reported by 3 independent trials. The NEL for REG-PAML has statistically significant improvement over the best baseline method (measured by t-test with p-value$<0.05$).}
\label{fig:matrix_train}
\end{figure*}

\begin{figure*}[tb!]
\centering
\subfigure[The distribution of minor and major  user LRs for REG-PAML ($\gamma = 10^{-3}$).]{
\includegraphics[width=0.3\textwidth]{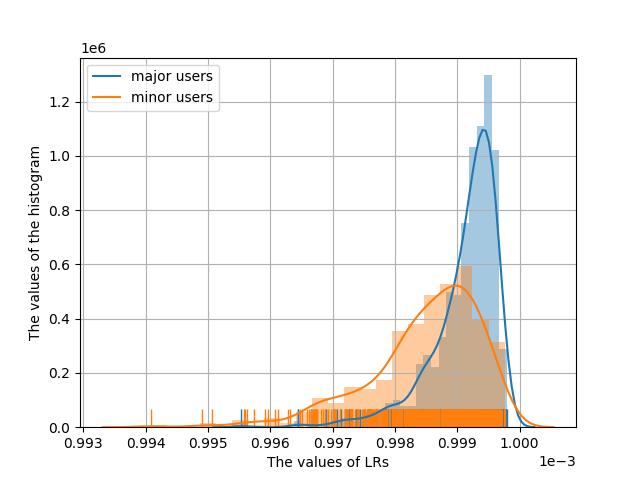}}\hfill
\subfigure[The distribution of minor and major user LRs for REG-PAML ($\gamma = 10^{-5}$).]{
\includegraphics[width=0.3\textwidth]{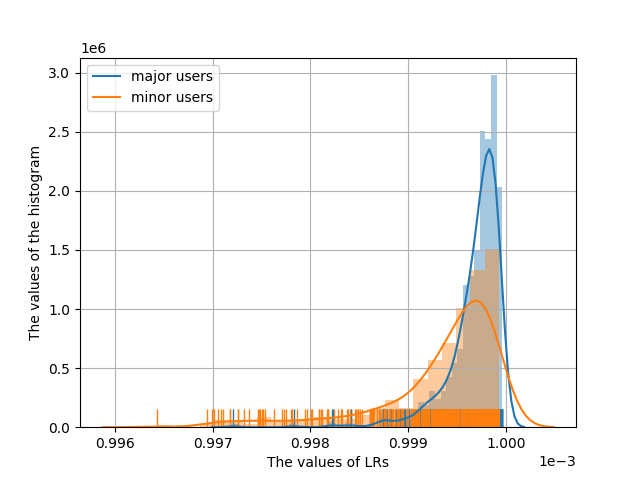}}\hfill 
\subfigure[The distribution of minor and major  user LRs for PAML.]{
\includegraphics[width=0.3\textwidth]{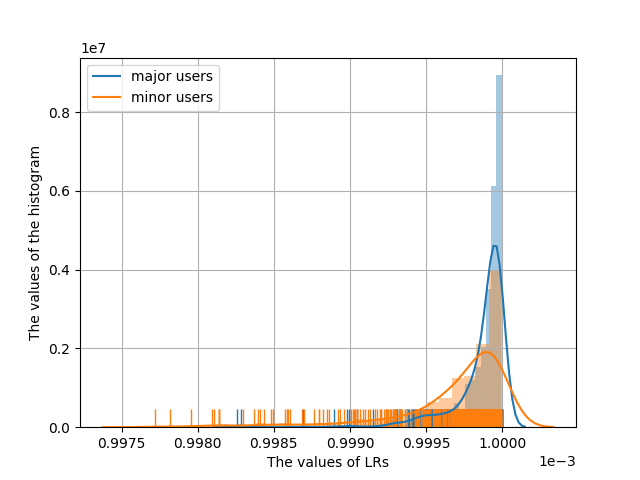}}\hfill
\caption{The distribution of LRs for REG-PAML ($\gamma=10^{-3}$), REG-PAML ($\gamma=10^{-5}$), and PAML. }\label{lrs}
\end{figure*}

\begin{table}[t]
    \centering
    \scriptsize   
    \captionsetup{justification=centering}
{
\begin{tabular}{|c|c|c|c|}

\toprule
\multirow{2}{*}{} & \multicolumn{3}{c|}{\textbf{MovieLens-1M}} \\ \cline{2-4}
  & \textbf{MSE (MAU)} & \textbf{MSE (MU)} & \textbf{P-value } \\ \midrule 
 \textbf{MELU}  & 1.413\scriptsize{$\pm$ 0.025} & 1.506\scriptsize{$\pm$ 0.021} & 0.071\\
\textbf{Meta-SGD}  &1.300\scriptsize{$\pm$ 0.059}&  1.403\scriptsize{$\pm$ 0.045}  &0.024\\
\textbf{Transfer-Learning}  & 1.288\scriptsize{$\pm$ 0.016}& \emph{1.340\scriptsize{$\pm$ 0.013}}  & 0.227\\
 \textbf{REG-PAML (ours) }  &\textbf{1.194}\scriptsize{$\pm$ 0.019}&\textbf{1.236\scriptsize{$\pm$ 0.011} } &0.312\\
 \textbf{AT-PAML (ours)} &   \emph{1.284\scriptsize{$\pm$ 0.005}}& 1.382\scriptsize{$\pm$ 0.011}  & 0.018\\
\bottomrule
\end{tabular}
}   \caption{%
    Results of different MSEs for MAjor Users (MAU) and Minor Users (MU) on MovieLens. The best results are highlighted in \textbf{bold} and the second-best results are in \emph{italic}. P-value is drawn by two-tailed student’s t-test between the minor user MSEs and major user MSEs over the same methods. 
    }
\label{ls_methods}
\end{table}

\subsection{Results and Discussions}
\textbf{Rating Prediction.} We first analyse the rating prediction results. As shown in Table~\ref{rat_table}, our methods (AT-PAML $\&$ REG-PAML) outperform other SOTA methods in different metrics, especially for REG-PAML, which has reached 3 out of 6 best results. For the AT-PAML, it has two best results and two second-best results, implying that this method is also well-performing. From the Figs.~5 and~6 in Appendix~D.7, we can find that REG-PAML has fewer outliners, indicating the robustness of REG-PAML. Moreover, in Fig.~5, Meta-SGD has some outliners with large values, which might be the reason that  Meta-SGD does not perform well in the BookCrossing dataset. Moreover, both the AT-PAML and REG-PAML have lower third quartile comparing with SOTAs, showing that our methods can achieve good-enough performance for most of the cold-start users.

\textbf{CTR prediction.} We then testify our method in the CTR prediction task. As shown in Figs.~\ref{tb_auc} and ~\ref{tb_nel},  Our methods achieve the best (0.5882) and the second-best (0.5806) results in the AUC metric comparing with other SOTA methods (the best result of SOTA methods is 0.5796). One interesting phenomenon is that for AUC, MeLU gets the worst result among all the methods, indicating that the vanilla meta learning does not work well in this dataset. For the NEL, our methods also get the top two results (-0.5382,-0.5426) among all other methods (the best result of SOTA methods is -0.5080). %
Moreover, MeLU does not perform well enough comparing with ours, indicating the importance of building a learnable mapping between user embedding to LR. Fig.~\ref{tb_all} reveals some statistical information on the NEL. We can find that the mean and the first quartile of the NEL distribution for Meta-SGD are larger than ours. This result reveals that our methods can make better prediction results than other meta learning methods for both the minor and major users. %

\textbf{Minor users.} We conduct experiments to show whether our methods help to improve the performance of minor users. As shown in Tab.~\ref{ls_methods}, AT-PAML outperforms all the SOTAs in both minor users and major users significantly. Also, from the two-tailed student’s t-test analysis, we find that the AT-PAML has the highest p-value, indicating that with a high probability the expected values for minor users and major users are similar. We also notice that SOTA meta-learning methods (MeLU and Meta-SGD) have relatively low p-values, revealing that the expected values for minor and major users are different with high probability. The results reveal that our methods are not only well-performing in the whole datasets but also focus on both the minor and major users than the SOTA meta-learning methods. Moreover, we also visualize the LRs for each user, as shown in Fig.~\ref{lrs}. Combining with Tab.~\ref{abs_three}, the closer the LRs of minor and major users are, the higher the MSEs, showing different LRs do play an important role in different user preference predictions. One interesting phenomenon is that during the experiments, we notice a \emph{failure case} shown in Fig.~8 in Appendix~D.7, which reveals  when the method is not capable to distinguish the minor users and the major users, it cannot perform well. These results also empirically 
show the significance of different LRs for different users. Also, the t-SNE  ~\citep{maaten2008visualizing} visualization of user embeddings ( Fig.~11 in Appendix~D.7) show that our methods can distinguish the minor and major users. Tab.~\ref{abs_gamma} reveals that $\gamma$ affects the performance of both major and minor users.

\textbf{Ablations.} We do ablation studies to evaluate whether our proposed modules work. As shown in Tab.~\ref{abs_three}, PAML with tree-based structures and regularizer ($\gamma=10^{-3}$) perform better than the vanilla one under the MSE and the Avg nDCG@3 metrics. Also, AT-PAML and REG-PAML achieve lower values in the MSE metric for both minor and major users (Tab.~\ref{abs_gamma} in Appendix~D.7). An interesting phenomenon is that with the decrease of parameter $\gamma$, the MSE values become higher while the p-values become lower, meaning that the model performs worse and focuses more on major users. This result reveals that the regularizer does play an important role to achieve better performance and drive the model to focus on both minor and major users.

\section{Conclusion}

    In this paper, we propose a novel personalized adaptive meta-learning method to address the user overfitting problem in cold-start user preference prediction challenge with three key contributions: 1) We are the first to introduce a personalized adaptive learning rate based meta learning approach to improve the performance of MAML by focusing on both the major and minor users. 2)  We build an approximated tree-based method to store different users' features for fast calculation and a similarity-based method to find similar users as a reference to obtain accurate personalized learning rates for different users. 3) To reduce the memory usage, we propose a memory agnostic regularizer to reduce the space complexity and maintain the memorizing ability by learning. Experiments on MovieLens-1M, BookCrossing, and real-world production dataset reveal that our method outperforms the state-of-the-art methods dramatically for both the minor users and the major users.

\begin{table}[t]
    \centering
    \scriptsize
    \captionsetup{justification=centering}
{
\begin{tabular}{|c|c|c|c|}

\toprule
\multirow{2}{*}{} & \multicolumn{3}{c|}{\textbf{MovieLens-1M}} \\ \cline{2-4}
 & \textbf{MSE (MAU)} & \textbf{MSE (MU)} & \textbf{P-value} \\ \midrule 
 \textbf{REG-PAML ($\gamma=10^{-5}$)} & 1.464\scriptsize{$\pm$ 0.032} & 1.601\scriptsize{$\pm$ 0.021} & 0.008\\
\textbf{REG-PAML ($\gamma=10^{-4}$) } &  1.210\scriptsize{$\pm$ 0.026} & 1.255\scriptsize{$\pm$ 0.024} &0.226 \\
 \textbf{REG-PAML ($\gamma=10^{-3}$)} & \emph{1.196\scriptsize{$\pm$ 0.019}} &\textbf{1.183\scriptsize{$\pm$ 0.024}} & 0.738\\
\textbf{REG-PAML ($\gamma=10^{-2}$)} & \textbf{1.194\scriptsize{$\pm$ 0.014}}  & \emph{1.236\scriptsize{$\pm$ 0.011}}  &0.312\\
\textbf{PAML } & 1.471\scriptsize{$\pm$ 0.036}  & 1.569\scriptsize{$\pm$ 0.031} &0.038 \\
\textbf{AT-PAML } & 1.284\scriptsize{$\pm$ 0.005} &  1.382\scriptsize{$\pm$ 0.011} &0.018 \\
\bottomrule
\end{tabular}
}    \caption{%
    Ablations of different methods for MAjor Users (MAU) and Minor Users (MU) on MovieLens. The best results are highlighted in \textbf{bold} and the second-best results are in \emph{italic}. \textbf{Avg} means average. P-value is drawn by two-tailed student’s t-test between the minor user MSEs and major user MSEs over the same methods. 
    }
\label{abs_gamma}
\end{table}

\section{Acknowledgements}
This work was supported by Alibaba Group through Alibaba Innovative Research (AIR) Program and Alibaba-NTU Joint Research Institute (JRI), Nanyang Technological University, Singapore.

The authors would like to thank Suming Yu, Zhenyu Shi, Rundong Wang, Xinrun Wang, Feifei Lin, Aye Phyu Phyu Aung, Hanzao Chen, Ziwen Jiang, Yi Cao, Yufei Feng for their helps. The authors would also thank the team of MELU for their codes.
\bibstyle{abbrv}
\bibliography{PAML}

\begin{thebibliography}{40}
\providecommand{\natexlab}[1]{#1}
\providecommand{\url}[1]{\texttt{#1}}
\providecommand{\urlprefix}{URL }
\expandafter\ifx\csname urlstyle\endcsname\relax
  \providecommand{\doi}[1]{doi:\discretionary{}{}{}#1}\else
  \providecommand{\doi}{doi:\discretionary{}{}{}\begingroup
  \urlstyle{rm}\Url}\fi

\bibitem[{Aljundi et~al.(2018)Aljundi, Babiloni, Elhoseiny, Rohrbach, and
  Tuytelaars}]{aljundi2018memory}
Aljundi, R.; Babiloni, F.; Elhoseiny, M.; Rohrbach, M.; and Tuytelaars, T.
  2018.
\newblock Memory aware synapses: Learning what (not) to forget.
\newblock In \emph{ECCV}, 139--154.

\bibitem[{Bharadhwaj(2019)}]{bharadhwaj2019meta}
Bharadhwaj, H. 2019.
\newblock Meta-Learning for User Cold-Start Recommendation.
\newblock In \emph{IJCNN}, 1--8.

\bibitem[{Chen et~al.(2018)Chen, Luo, Dong, Li, and He}]{chen2018federated}
Chen, F.; Luo, M.; Dong, Z.; Li, Z.; and He, X. 2018.
\newblock Federated meta-learning with fast convergence and efficient
  communication.
\newblock \emph{arXiv preprint arXiv:1802.07876} .

\bibitem[{Dong et~al.(2020)Dong, Yuan, Yao, Xu, and Zhu}]{dong2020mamo}
Dong, M.; Yuan, F.; Yao, L.; Xu, X.; and Zhu, L. 2020.
\newblock MAMO: memory-augmented meta-optimization for cold-start
  recommendation.
\newblock In \emph{KDD}, 688--697.

\bibitem[{Finn, Abbeel, and Levine(2017)}]{finn2017model}
Finn, C.; Abbeel, P.; and Levine, S. 2017.
\newblock Model-agnostic meta-learning for fast adaptation of deep networks.
\newblock In \emph{ICML}, 1126--1135.

\bibitem[{Finn et~al.(2017)Finn, Yu, Zhang, Abbeel, and Levine}]{finn2017one}
Finn, C.; Yu, T.; Zhang, T.; Abbeel, P.; and Levine, S. 2017.
\newblock One-Shot Visual Imitation Learning via Meta-Learning.
\newblock In \emph{CORL}, 357--368.

\bibitem[{Flennerhag et~al.(2020)Flennerhag, Rusu, Pascanu, Yin, and
  Hadsell}]{flennerhag2019meta}
Flennerhag, S.; Rusu, A.~A.; Pascanu, R.; Yin, H.; and Hadsell, R. 2020.
\newblock Meta-Learning with Warped Gradient Descent.
\newblock In \emph{ICLR}.

\bibitem[{Friedman, Bentley, and Finkel(1976)}]{friedman1976algorithm}
Friedman, J.~H.; Bentley, J.~L.; and Finkel, R.~A. 1976.
\newblock An algorithm for finding best matches in logarithmic time.
\newblock \emph{TOMS} 3: 209--226.

\bibitem[{Gretton et~al.(2012)Gretton, Borgwardt, Rasch, Sch{\"o}lkopf, and
  Smola}]{gretton2012kernel}
Gretton, A.; Borgwardt, K.~M.; Rasch, M.~J.; Sch{\"o}lkopf, B.; and Smola, A.
  2012.
\newblock A kernel two-sample test.
\newblock \emph{JMLR} 13(Mar): 723--773.

\bibitem[{Gui et~al.(2018)Gui, Wang, Ramanan, and Moura}]{gui2018few}
Gui, L.-Y.; Wang, Y.-X.; Ramanan, D.; and Moura, J.~M. 2018.
\newblock Few-shot human motion prediction via meta-learning.
\newblock In \emph{ECCV}, 432--450.

\bibitem[{Guo et~al.(2017)Guo, Tang, Ye, Li, and He}]{guo2017deepfm}
Guo, H.; Tang, R.; Ye, Y.; Li, Z.; and He, X. 2017.
\newblock DeepFM: A factorization-machine based neural network for {CTR}
  prediction.
\newblock In \emph{AAAI}, 1725--1731.

\bibitem[{Guo et~al.(2019)Guo, Li, An, Hui, Huang, Zhang, and
  Zhao}]{guo2019securing}
Guo, Q.; Li, Z.; An, B.; Hui, P.; Huang, J.; Zhang, L.; and Zhao, M. 2019.
\newblock Securing the deep fraud detector in large-scale e-commerce platform
  via adversarial machine learning approach.
\newblock In \emph{WWW}, 616--626.

\bibitem[{Harper and Konstan(2015)}]{harper2015movielens}
Harper, F.~M.; and Konstan, J.~A. 2015.
\newblock The {movielen}s datasets: History and context.
\newblock \emph{TIIS} 5(4): 1--19.

\bibitem[{Kingma and Ba(2014)}]{kingma2014adam}
Kingma, D.~P.; and Ba, J. 2014.
\newblock Adam: A method for stochastic optimization.
\newblock \emph{arXiv preprint arXiv:1412.6980} .

\bibitem[{Kirkpatrick et~al.(2017)Kirkpatrick, Pascanu, Rabinowitz, Veness,
  Desjardins, Rusu, Milan, Quan, Ramalho, Grabska-Barwinska
  et~al.}]{kirkpatrick2017overcoming}
Kirkpatrick, J.; Pascanu, R.; Rabinowitz, N.; Veness, J.; Desjardins, G.; Rusu,
  A.~A.; Milan, K.; Quan, J.; Ramalho, T.; Grabska-Barwinska, A.; et~al. 2017.
\newblock Overcoming catastrophic forgetting in neural networks.
\newblock \emph{PNAS} 114(13): 3521--3526.

\bibitem[{Lee et~al.(2019)Lee, Im, Jang, Cho, and Chung}]{lee2019melu}
Lee, H.; Im, J.; Jang, S.; Cho, H.; and Chung, S. 2019.
\newblock {MeLU}: Meta-Learned user preference estimator for cold-Start
  recommendation.
\newblock In \emph{KDD}, 1073--1082.

\bibitem[{Lee and Choi(2018)}]{lee2018gradient}
Lee, Y.; and Choi, S. 2018.
\newblock Gradient-Based Meta-Learning with Learned Layerwise Metric and
  Subspace.
\newblock In \emph{ICML}, 2933--2942.

\bibitem[{Li et~al.(2017)Li, Zhou, Chen, and Li}]{li2017meta}
Li, Z.; Zhou, F.; Chen, F.; and Li, H. 2017.
\newblock Meta-{SGD}: Learning to learn quickly for few-shot learning.
\newblock \emph{arXiv preprint arXiv:1707.09835} .

\bibitem[{Lin et~al.(2019)Lin, Chen, Pei, Sun, Xiao, Sun, Zhang, Ou, and
  Jiang}]{lin2019pareto}
Lin, X.; Chen, H.; Pei, C.; Sun, F.; Xiao, X.; Sun, H.; Zhang, Y.; Ou, W.; and
  Jiang, P. 2019.
\newblock A pareto-efficient algorithm for multiple objective optimization in
  e-commerce recommendation.
\newblock In \emph{Rec{S}ys}, 20--28.

\bibitem[{Luo et~al.(2020)Luo, Chen, Cheng, Dong, He, Feng, and
  Li}]{luo2020metaselector}
Luo, M.; Chen, F.; Cheng, P.; Dong, Z.; He, X.; Feng, J.; and Li, Z. 2020.
\newblock MetaSelector: meta-learning for recommendation with user-level
  adaptive model selection.
\newblock In \emph{WWW}, 2507--2513.

\bibitem[{Maaten and Hinton(2008)}]{maaten2008visualizing}
Maaten, L. v.~d.; and Hinton, G. 2008.
\newblock Visualizing data using t-{SNE}.
\newblock \emph{JMLR} 9: 2579--2605.

\bibitem[{Madotto et~al.(2019)Madotto, Lin, Wu, and
  Fung}]{madotto2019personalizing}
Madotto, A.; Lin, Z.; Wu, C.-S.; and Fung, P. 2019.
\newblock Personalizing dialogue agents via meta-learning.
\newblock In \emph{ACL}, 5454--5459.

\bibitem[{Muja and Lowe(2013)}]{muja2013flann}
Muja, M.; and Lowe, D. 2013.
\newblock FLANN-Fast Library for Approximate Nearest Neighbors User Manual.
\newblock \emph{University of British Columbia} .

\bibitem[{Muja and Lowe(2014)}]{muja2014scalable}
Muja, M.; and Lowe, D.~G. 2014.
\newblock Scalable nearest neighbor algorithms for high dimensional data.
\newblock \emph{PAMI} 36(11): 2227--2240.

\bibitem[{Pan et~al.(2019)Pan, Li, Ao, Tang, and
  He}]{Pan:2019:WUC:3331184.3331268}
Pan, F.; Li, S.; Ao, X.; Tang, P.; and He, Q. 2019.
\newblock Warm Up Cold-start Advertisements: Improving {CTR} predictions via
  learning to learn ID embeddings.
\newblock In \emph{SIGIR}, 695--704.

\bibitem[{Park and Oliva(2019)}]{park2019meta}
Park, E.; and Oliva, J.~B. 2019.
\newblock Meta-curvature.
\newblock In \emph{NeurIPS}, 3309--3319.

\bibitem[{Pennington, Socher, and Manning(2014)}]{pennington2014glove}
Pennington, J.; Socher, R.; and Manning, C.~D. 2014.
\newblock Glove: Global vectors for word representation.
\newblock In \emph{EMNLP}, 1532--1543.

\bibitem[{Qu et~al.(2016)Qu, Cai, Ren, Zhang, Yu, Wen, and
  Wang}]{qu2016product}
Qu, Y.; Cai, H.; Ren, K.; Zhang, W.; Yu, Y.; Wen, Y.; and Wang, J. 2016.
\newblock Product-based neural networks for user response prediction.
\newblock In \emph{ICDM}, 1149--1154.

\bibitem[{Rajeswaran et~al.(2019)Rajeswaran, Finn, Kakade, and
  Levine}]{rajeswaran2019meta}
Rajeswaran, A.; Finn, C.; Kakade, S.~M.; and Levine, S. 2019.
\newblock Meta-learning with implicit gradients.
\newblock In \emph{{N}eu{rIPS}}, 113--124.

\bibitem[{Ravi and Larochelle(2016)}]{ravi2016optimization}
Ravi, S.; and Larochelle, H. 2016.
\newblock Optimization as a model for few-shot learning.
\newblock In \emph{ICLR}.

\bibitem[{Ren et~al.(2019)Ren, Qin, Fang, Zhang, Zheng, Bian, Zhou, Xu, Yu, Zhu
  et~al.}]{ren2019lifelong}
Ren, K.; Qin, J.; Fang, Y.; Zhang, W.; Zheng, L.; Bian, W.; Zhou, G.; Xu, J.;
  Yu, Y.; Zhu, X.; et~al. 2019.
\newblock Lifelong Sequential Modeling with Personalized Memorization for User
  Response Prediction.
\newblock In \emph{SIGIR}, 565--574.

\bibitem[{Rusu et~al.(2018)Rusu, Rao, Sygnowski, Vinyals, Pascanu, Osindero,
  and Hadsell}]{rusu2018meta}
Rusu, A.~A.; Rao, D.; Sygnowski, J.; Vinyals, O.; Pascanu, R.; Osindero, S.;
  and Hadsell, R. 2018.
\newblock Meta-Learning with Latent Embedding Optimization.
\newblock In \emph{ICLR}.

\bibitem[{Sch{\"o}lkopf et~al.(2002)Sch{\"o}lkopf, Smola, Bach
  et~al.}]{scholkopf2002learning}
Sch{\"o}lkopf, B.; Smola, A.~J.; Bach, F.; et~al. 2002.
\newblock \emph{Learning with Kernels: Support Vector Machines, Regularization,
  Optimization, and Beyond}.
\newblock MIT Press.

\bibitem[{Tan et~al.(2018)Tan, Sun, Kong, Zhang, Yang, and Liu}]{tan2018survey}
Tan, C.; Sun, F.; Kong, T.; Zhang, W.; Yang, C.; and Liu, C. 2018.
\newblock A survey on deep transfer learning.
\newblock In \emph{ICANN}, 270--279.

\bibitem[{Vanschoren(2018)}]{vanschoren2018meta}
Vanschoren, J. 2018.
\newblock Meta-learning: A survey.
\newblock \emph{arXiv preprint arXiv:1810.03548} .

\bibitem[{Vartak et~al.(2017)Vartak, Thiagarajan, Miranda, Bratman, and
  Larochelle}]{vartak2017meta}
Vartak, M.; Thiagarajan, A.; Miranda, C.; Bratman, J.; and Larochelle, H. 2017.
\newblock A meta-learning perspective on cold-start recommendations for items.
\newblock In \emph{NeurIPS}, 6904--6914.

\bibitem[{Xu, van Hasselt, and Silver(2018)}]{xu2018meta}
Xu, Z.; van Hasselt, H.~P.; and Silver, D. 2018.
\newblock Meta-gradient reinforcement learning.
\newblock In \emph{NeurIPS}, 2396--2407.

\bibitem[{Yianilos(1993)}]{yianilos1993data}
Yianilos, P.~N. 1993.
\newblock Data structures and algorithms for nearest neighbor search in general
  metric spaces.
\newblock In \emph{SODA}, volume~93, 311--321.

\bibitem[{Zhao et~al.(2018)Zhao, Li, An, Lu, Yang, and
  Chu}]{zhao2018impression}
Zhao, M.; Li, Z.; An, B.; Lu, H.; Yang, Y.; and Chu, C. 2018.
\newblock Impression Allocation for Combating Fraud in E-commerce Via Deep
  Reinforcement Learning with Action Norm Penalty.
\newblock In \emph{IJCAI}, 3940--3946.

\bibitem[{Ziegler et~al.(2005)Ziegler, McNee, Konstan, and
  Lausen}]{ziegler2005improving}
Ziegler, C.-N.; McNee, S.~M.; Konstan, J.~A.; and Lausen, G. 2005.
\newblock Improving recommendation lists through topic diversification.
\newblock In \emph{WWW}, 22--32.

\end{thebibliography}

\newpage
\onecolumn

\appendix

\section{Omitted Proofs}\label{proof}
\setcounter{assumption}{0}

\subsection{Proofs of Lemmas \ref{distribution} and~\ref{bound}}\label{proof_bound}

\setcounter{lemma}{0}
\begin{lemma}
 \rs{Suppose that the loss function is defined as  $L = \sum\nolimits_{i=1}^2p_i(\theta_i- x_i)^2$ , where $x_1,x_2$ are the actual preferences for task $1$ and $2$, and $\theta_i = \theta-\alpha \nabla_{\theta} L_{\tau_{i}}(\theta)$, when $p_1 \geq p_2$, $x_1\geq x_2$, we have  $(\theta^*_1- x_1)^2 \leq (\theta^*_2 - x_2)^2 $, where $\theta^*$ is the optimal parameter $\theta$.}
\end{lemma}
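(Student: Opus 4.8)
The statement is a single-variable optimization fact, so the plan is to reduce everything to one scalar variable $\theta$ and compare residuals at the optimum. First I would make the inner-update structure explicit: with $L_{\tau_i}(\theta) = (\theta - x_i)^2$ we have $\nabla_\theta L_{\tau_i}(\theta) = 2(\theta - x_i)$, so $\theta_i = \theta - 2\alpha(\theta - x_i) = (1-2\alpha)\theta + 2\alpha x_i$. Writing $c := 1 - 2\alpha$ (assumed in $(0,1)$ for a meaningful small learning rate), we get $\theta_i - x_i = c(\theta - x_i)$, hence the meta-objective becomes $L(\theta) = c^2\bigl[p_1(\theta - x_1)^2 + p_2(\theta - x_2)^2\bigr]$. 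Since $c^2$ is a positive constant, minimizing $L$ is equivalent to minimizing the convex quadratic $g(\theta) := p_1(\theta-x_1)^2 + p_2(\theta-x_2)^2$.

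Next I would compute the minimizer of $g$ directly: $g'(\theta) = 2p_1(\theta-x_1) + 2p_2(\theta-x_2) = 0$ gives $\theta^* = \dfrac{p_1 x_1 + p_2 x_2}{p_1 + p_2}$, the density-weighted average of the two target preferences. Then the key computation is the two residuals at $\theta^*$:
\begin{align*}
\theta^* - x_1 &= \frac{p_2(x_2 - x_1)}{p_1 + p_2}, & \theta^* - x_2 &= \frac{p_1(x_1 - x_2)}{p_1 + p_2}.
\end{align*}
Therefore $(\theta^*_1 - x_1)^2 = c^2\,\dfrac{p_2^2(x_1-x_2)^2}{(p_1+p_2)^2}$ and $(\theta^*_2 - x_2)^2 = c^2\,\dfrac{p_1^2(x_1-x_2)^2}{(p_1+p_2)^2}$ (using $\theta^*_i - x_i = c(\theta^* - x_i)$). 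The ratio of the first to the second is exactly $p_2^2/p_1^2 \le 1$ by the hypothesis $p_1 \ge p_2$, which yields $(\theta^*_1 - x_1)^2 \le (\theta^*_2 - x_2)^2$ immediately; note the hypothesis $x_1 \ge x_2$ is not even needed beyond ensuring $x_1 \neq x_2$ makes the inequality strict (if $x_1 = x_2$ both sides are zero and equality holds).

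The only genuine subtlety — and the step I would be most careful about — is the treatment of the constant $c = 1 - 2\alpha$: one must argue that $c \ne 0$ (otherwise $L \equiv 0$ and the claim is a trivial equality) and, more importantly, that the factor $c^2$ is common to both residuals so it cancels in the comparison. As long as $\alpha$ is a fixed scalar shared by both tasks (which it is in standard MAML, the setting of this lemma), this is automatic. I would also state explicitly that $g$ is strictly convex (since $p_1 + p_2 > 0$), so $\theta^*$ is the unique global minimizer and "$\theta^*$ is the optimal parameter" is well-defined. With those remarks in place the proof is essentially the three displayed computations above.
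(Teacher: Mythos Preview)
Your proof is correct and follows essentially the same approach as the paper: compute the inner update explicitly, find the minimizer $\theta^* = (p_1x_1+p_2x_2)/(p_1+p_2)$ of the resulting scalar quadratic, and compare the two squared residuals. Your presentation is in fact cleaner than the paper's—the substitution $c=1-2\alpha$ and early factoring of $c^2$ avoids the messier gradient calculation the paper carries out—and your observation that the hypothesis $x_1\ge x_2$ is not actually needed for the inequality (only $p_1\ge p_2$ is used) is a valid remark the paper does not make.
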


\begin{proof}

Taking the gradient of $L$, we have:
\begin{align*}
&\nabla_{\theta} L = \nabla_{\theta} p_1 L_1 + \nabla_{\theta} p_2 L_2,\\
&  = p_1 \nabla_{\theta}\left(x_1 - \theta - \alpha \nabla_{\theta}(x_1 -  \theta)^ 2\right)^ 2 + p_2\nabla_{\theta}\left(x_2 - \theta -  \alpha \nabla_{\theta}(x_2 -  \theta)^ 2\right)^2,\\
&  = p_1 \nabla_{\theta}(x_1 -\theta -  2\alpha (\theta - x_1 ))^2 + p_2\nabla_{\theta}(x_2 -\theta -  2\alpha (\theta - \theta - x_2))^2,  \\
 & = 2 p_1 (x_1 - \theta -2\alpha (\theta - x_1) )(-2\alpha-1)+ 2p_2 (x_2 - 2\alpha (\theta - x_2) )(-2\alpha-1) .
  \end{align*}Solving $\nabla_{\theta^\prime} L = 0 $ above, we can get 
  \begin{align*}
& 2 p_1 (x_1 - \theta - 2\alpha (\theta - x_1) )(-2\alpha)+ 2p_2 (x_2 - \theta - 2\alpha (\theta- x_2) )(-2\alpha)  = 0 \\
&\rightarrow \theta^* = \frac{(2\alpha+1) (x_2p_2+x_1p_1)} {(2\alpha+1) (p_2+p_1)}  = \frac{(x_2p_2+x_1p_1)} {(p_2+p_1)}.
  \end{align*}If $p_1 \geq p_2$, putting $\theta^*$ into $L_1$ and $L_2$, we have $(\theta^*_1- x_1)^2\leq (\theta^*_2 - x_2)^2 $, which concludes our proof.
\end{proof}

\begin{lemma}
Based on Lemma \ref{distribution}, we further defined as $L^\prime =\sum\nolimits_{i=1}^2p_i(\theta^\prime_i- x_i)^2$, $\text { s.t. } {\theta}^\prime_{i}={\theta}-\alpha(h_i) \nabla_{{\theta}} L_{\tau_{i}}({\theta})$, ($x_1,x_2$ are the target values for task $1$ and $2$, where $x_2> x_1$) there exists $\alpha_1$ and $\alpha_2$ that satisfying
\begin{equation*}
(\theta^{*}_2- x_2)^2 \geq (\theta^{*\prime}_2 - x_2)^2 \text{and}\  L^*\geq L^{*\prime},
\end{equation*}
where $L^*$ and $L^{*\prime}$ are the optimal loss functions for standard MAML and adaptive MAML respectively. $\theta^{*\prime}$ is the optimal $\theta^{\prime}$ parameters.
\end{lemma}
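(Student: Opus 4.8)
The plan is to make everything explicit by computing the two optimization problems in closed form, exactly as was done for Lemma~\ref{distribution}, and then exhibit a concrete choice of the per-task learning rates $\alpha_1 := \alpha(h_1)$, $\alpha_2 := \alpha(h_2)$ that forces the optimal adaptive parameter $\theta^{*\prime}$ to sit closer to the minor task's target $x_2$ while not increasing the weighted loss. First I would recompute $\theta_i^\prime = \theta - 2\alpha_i(\theta - x_i) = (1+2\alpha_i)\theta \cdot(\text{wait})$; more carefully, $\theta_i^\prime = \theta - \alpha_i \nabla_\theta (x_i-\theta)^2 = \theta + 2\alpha_i(x_i-\theta) = (1-2\alpha_i)\theta + 2\alpha_i x_i$, so that $x_i - \theta_i^\prime = (1-(1-2\alpha_i))\,(\dots)$, i.e. $x_i-\theta_i^\prime = (1+2\alpha_i)(x_i-\theta)$ up to sign — the point is each term is a rescaled copy of $(x_i-\theta)$. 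Substituting into $L^\prime = \sum_i p_i(\theta_i^\prime - x_i)^2 = \sum_i p_i (1+2\alpha_i)^2(\theta-x_i)^2$, I get a weighted quadratic in $\theta$ with modified weights $q_i := p_i(1+2\alpha_i)^2$, whose minimizer is $\theta^{*\prime} = (q_1 x_1 + q_2 x_2)/(q_1+q_2)$, in perfect analogy with the $\theta^*$ formula from Lemma~\ref{distribution}.

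The key observation is then purely algebraic: $\theta^{*\prime}$ is a convex combination of $x_1$ and $x_2$ with weight $q_2/(q_1+q_2)$ on $x_2$, and this weight is increasing in $\alpha_2$ and decreasing in $\alpha_1$. Hence by taking $\alpha_1$ small (or even the baseline value $\alpha$) and $\alpha_2$ large enough, the weight $q_2/(q_1+q_2)$ can be pushed arbitrarily close to $1$, so $\theta^{*\prime} \to x_2$ and therefore $(\theta^{*\prime}_2 - x_2)^2 = (1+2\alpha_2)^2(\theta^{*\prime}-x_2)^2 \to 0$ — wait, one must be careful here, since the prefactor $(1+2\alpha_2)^2$ also grows. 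The cleaner route is: as the weight on $x_2$ tends to $1$, $\theta^{*\prime}-x_2$ tends to $0$ at the rate $q_1/(q_1+q_2) \sim p_1(1+2\alpha_1)^2/(p_2(1+2\alpha_2)^2)$, which is $O(1/\alpha_2^2)$, so $(1+2\alpha_2)^2(\theta^{*\prime}-x_2)^2 = O(1/\alpha_2^2) \to 0$. This simultaneously drives $(\theta_2^{*\prime}-x_2)^2 \to 0 \le (\theta_2^*-x_2)^2$ (the latter being a fixed positive number from Lemma~\ref{distribution}), giving the first inequality. For the second inequality $L^* \ge L^{*\prime}$, I note that $L^*$ is exactly the value of the original MAML objective at its minimizer, which is a fixed positive constant $= \tfrac{p_1 p_2}{p_1+p_2}(1+2\alpha)^2(x_1-x_2)^2$, while $L^{*\prime} = \tfrac{q_1 q_2}{q_1+q_2}(x_1-x_2)^2$; since $q_1\to p_1(1+2\alpha_1)^2$ stays bounded and $q_1 q_2/(q_1+q_2) \le q_1 = p_1(1+2\alpha_1)^2$, choosing $\alpha_1 = \alpha$ already makes $L^{*\prime} \le p_1(1+2\alpha)^2(x_1-x_2)^2$; to actually beat $L^*$ one instead keeps both $\alpha_i$ moderate but balances the weights, or simply observes $L^{*\prime} = \tfrac{q_1 q_2}{q_1+q_2}(x_1-x_2)^2 \to 0$ as $\alpha_1 \to 0$ with $\alpha_2$ fixed (then $q_1 \to 0$), so both conditions hold in the limit $\alpha_1 \to 0$, $\alpha_2$ large — and hence by continuity they hold for some concrete finite pair $(\alpha_1,\alpha_2)$.

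Concretely, the steps I would carry out are: (i) derive $\theta_i^\prime$ and the reduced objective $L^\prime(\theta) = \sum_i q_i(\theta-x_i)^2$ with $q_i = p_i(1+2\alpha_i)^2$; (ii) solve $\nabla_\theta L^\prime = 0$ to get $\theta^{*\prime} = \frac{q_1 x_1 + q_2 x_2}{q_1+q_2}$ and the optimal value $L^{*\prime} = \frac{q_1 q_2}{q_1+q_2}(x_1-x_2)^2$; (iii) specialize Lemma~\ref{distribution} with $\alpha_1=\alpha_2=\alpha$ to record $\theta^* , L^*$ and note $(\theta_2^*-x_2)^2 = \big(\tfrac{p_1}{p_1+p_2}\big)^2(1+2\alpha)^2(x_1-x_2)^2 > 0$; (iv) let $\alpha_1 \to 0^+$ (and pick any fixed $\alpha_2>0$, e.g. $\alpha_2 = \alpha$) so that $q_1 \to 0$, whence $\theta^{*\prime} \to x_2$, $(\theta_2^{*\prime}-x_2)^2 \to 0$, and $L^{*\prime} \to 0$; (v) conclude by continuity that there is a genuine finite pair $(\alpha_1,\alpha_2)$ with strictly positive entries for which both $(\theta_2^{*\prime}-x_2)^2 \le (\theta_2^*-x_2)^2$ and $L^{*\prime}\le L^*$ hold. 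The main obstacle — and the place to be most careful — is bookkeeping the prefactor $(1+2\alpha_2)^2$ that appears in $(\theta_2^{*\prime}-x_2)^2$: naively sending $\alpha_2\to\infty$ blows this up, so the argument must be organized around shrinking $\alpha_1$ (equivalently $q_1\to 0$) rather than inflating $\alpha_2$, which makes both target quantities vanish cleanly. A secondary subtlety is that the statement only claims existence of suitable $\alpha_1,\alpha_2$ (not that the learned $\alpha(h_i)$ achieves them), so a limiting/continuity argument suffices and no delicate estimate is actually needed.
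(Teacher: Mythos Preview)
Your overall strategy matches the paper's: rewrite the adaptive objective as a weighted quadratic $L'(\theta)=\sum_i q_i(\theta-x_i)^2$ with effective weights $q_i$ depending on $\alpha_i$, read off $\theta^{*\prime}=(q_1x_1+q_2x_2)/(q_1+q_2)$ and $L^{*\prime}=\frac{q_1q_2}{q_1+q_2}(x_1-x_2)^2$, then choose the $\alpha_i$. The paper proceeds identically up to that point and then exhibits one explicit pair, $\alpha_1=\alpha$ and $\alpha_2=\tfrac12\big[(2\alpha+1)\sqrt{p_1/p_2}-1\big]$ (engineered so that the two effective weights coincide), rather than arguing via a limit.

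The genuine gap is your step~(iv). You set $q_i=p_i(1+2\alpha_i)^2$ and then assert that $\alpha_1\to0^+$ forces $q_1\to0$, whence $\theta^{*\prime}\to x_2$ and $L^{*\prime}\to0$. This is false: $p_1(1+2\alpha_1)^2\to p_1\neq0$ as $\alpha_1\downarrow0$. In fact the inner update you yourself computed, $\theta_i'=\theta+2\alpha_i(x_i-\theta)$, gives $x_i-\theta_i'=(1-2\alpha_i)(x_i-\theta)$, so the correct weight is $q_i=p_i(1-2\alpha_i)^2$; but even with this sign, $\alpha_1\to0$ still sends $q_1\to p_1$, not to $0$. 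The intuition ``small $\alpha_1$ means task~1 is ignored'' is backwards here: a vanishing inner learning rate means \emph{no} adaptation on task~1, so the outer loss continues to pay the full $p_1(\theta-x_1)^2$. The easy repair, once the sign is fixed, is to send $\alpha_1\to\tfrac12$ (so $q_1=p_1(1-2\alpha_1)^2\to0$), or even more simply to set $\alpha_2=\tfrac12$ outright: one inner step then lands exactly on $x_2$, the task-$2$ loss vanishes identically, $\theta^{*\prime}=x_1$, $\theta^{*\prime}_2=x_2$, and $L^{*\prime}=0$. With that correction your steps~(iv)--(v) go through verbatim and the continuity argument becomes unnecessary.
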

\begin{proof}
  
  For $L^\prime$, using the similar approach, we have:
 \begin{align*}
&\nabla_{\theta} L^\prime = \nabla_{\theta} p_1 L^\prime_1 + \nabla_{\theta} p_2 L^\prime_2 \\
 & = 2 p_1 (x_1 - \theta^\prime - 2\alpha_1 (\theta^\prime - x_1) )(-2\alpha_1-1)+ 2p_2 (x_2 -\theta^\prime -  2\alpha_2 (\theta^\prime - x_2) )(-2\alpha_2-1)  
  \end{align*}
Solving $\nabla_{\theta} L = 0 $ above, we can get

  \begin{equation*}
\theta^{*\prime} = \frac{(2\alpha_1+1)^2p_1x_1+ (2\alpha_2+1)^2p_2x_2}{(2\alpha_1+1)^2 p_1 + (2\alpha_2+1)^2 p_2} 
\end{equation*}

Then if taking $\alpha_1= \alpha$ and $\alpha_2= \frac{[(2\alpha+1)\sqrt{\frac{q_1}{q_2}}-1]}{2}$. We can check that $(\theta^{*}_2- x_2)^2 \geq (\theta^{*^\prime}_2 - x_2)^2 $ and $ L^*\geq L^{*\prime}$.
  \end{proof}

\subsection{Proof of upper bound in Sec.~\ref{mar}}\label{ubound}
\begin{proposition}
Suppose $L_{\tau_i}(\boldsymbol{\theta})$ is Lipschitz continuity w.r.t user embedding, i.e., $\|L_{\tau_i}(\boldsymbol{\theta}) - L_{\tau_j}(\boldsymbol{\theta})\|\leq \kappa \|h_i -h_j \ \|$. The upper bound for the multi-user case is:
\begin{equation*}
       \sum\nolimits_{u_i,u_j\in\mathcal{U};i>j}||L_{\tau_i}\left(\boldsymbol{\theta}_{i}\right) -     L_{\tau_j}\left(\boldsymbol{\theta}_{j}\right)||\leq  \sum\nolimits_i\left(   (|\mathcal{U}|-1)||\frac{\partial L}{\partial\boldsymbol{\theta}}||^2|| \alpha(h_i)||\right),
\end{equation*}

\end{proposition}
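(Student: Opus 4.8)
The plan is to bound the pairwise loss differences $\|L_{\tau_i}(\boldsymbol{\theta}_i) - L_{\tau_j}(\boldsymbol{\theta}_j)\|$ by first invoking the Lipschitz continuity assumption to pass to the difference in the adapted parameters, and then expanding $\boldsymbol{\theta}_i = \boldsymbol{\theta} - \alpha(h_i)\nabla_{\boldsymbol{\theta}} L_{\tau_i}(\boldsymbol{\theta})$. The first step would be to write, for each pair $i > j$,
\begin{equation*}
\|L_{\tau_i}(\boldsymbol{\theta}_i) - L_{\tau_j}(\boldsymbol{\theta}_j)\| \leq \|L_{\tau_i}(\boldsymbol{\theta}_i) - L_{\tau_i}(\boldsymbol{\theta}_j)\| + \|L_{\tau_i}(\boldsymbol{\theta}_j) - L_{\tau_j}(\boldsymbol{\theta}_j)\|,
\end{equation*}
where the second term is controlled by $\kappa\|h_i - h_j\|$ and, when summed over all pairs, contributes the constant $C$ (it does not depend on the adaptation and can be absorbed). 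For the first term I would again use Lipschitzness (now in $\boldsymbol{\theta}$, or re-use the same constant under a uniform assumption) to bound it by a multiple of $\|\boldsymbol{\theta}_i - \boldsymbol{\theta}_j\|$, and then substitute the MAML inner-update rule so that this becomes $\|\alpha(h_i)\nabla_{\boldsymbol{\theta}} L_{\tau_i}(\boldsymbol{\theta}) - \alpha(h_j)\nabla_{\boldsymbol{\theta}} L_{\tau_j}(\boldsymbol{\theta})\|$.

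Next I would apply the triangle inequality to split $\|\alpha(h_i)\nabla_{\boldsymbol{\theta}} L_{\tau_i}(\boldsymbol{\theta}) - \alpha(h_j)\nabla_{\boldsymbol{\theta}} L_{\tau_j}(\boldsymbol{\theta})\| \leq \|\alpha(h_i)\nabla_{\boldsymbol{\theta}} L_{\tau_i}(\boldsymbol{\theta})\| + \|\alpha(h_j)\nabla_{\boldsymbol{\theta}} L_{\tau_j}(\boldsymbol{\theta})\|$, so that each summand depends only on a single user. When we now sum over all ordered pairs $i > j$, each index $i$ appears in exactly $|\mathcal{U}| - 1$ pairs, which produces the factor $(|\mathcal{U}| - 1)$ multiplying $\sum_i \|\alpha(h_i)\nabla_{\boldsymbol{\theta}} L_{\tau_i}(\boldsymbol{\theta})\|$. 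Reconciling the norm bookkeeping with the stated right-hand side $\sum_i (|\mathcal{U}|-1)\,\|\partial L_{\tau_i}/\partial\boldsymbol{\theta}\|^2\,\|\alpha(h_i)\|$ — in particular the appearance of the \emph{square} on the gradient norm and the separation of $\alpha(h_i)$ out of the product — requires either a sub-multiplicativity step ($\|\alpha(h_i)\nabla_{\boldsymbol{\theta}} L_{\tau_i}\| \leq \|\alpha(h_i)\| \cdot \|\nabla_{\boldsymbol{\theta}} L_{\tau_i}\|$, treating $\alpha(h_i)$ as a diagonal preconditioner) combined with a bound $\|\nabla_{\boldsymbol{\theta}} L_{\tau_i}\| \leq \|\nabla_{\boldsymbol{\theta}} L_{\tau_i}\|^2$ valid when the gradient norm is at least one, or more cleanly folding the Lipschitz constant $\kappa$ and any such adjustments into the absorbed constant; I would state the precise normalization I am using up front.

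I expect the main obstacle to be exactly this last reconciliation: getting from the natural bound, which is linear in $\|\nabla_{\boldsymbol{\theta}} L_{\tau_i}(\boldsymbol{\theta})\|$ and in $\alpha(h_i)$ as a scalar multiplier, to the claimed form with $\|\partial L_{\tau_i}/\partial\boldsymbol{\theta}\|^2$ and a detached $\|\alpha(h_i)\|$. The honest route is to make clear which operator-norm / Cauchy–Schwarz inequalities are invoked (treating $\alpha(h)$ either as a scalar or as a diagonal learning-rate matrix, matching the rest of the paper) and to be explicit that constants and lower-order terms are swept into $C$; the combinatorial factor $|\mathcal{U}|-1$ itself is the easy part. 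Once the single-user summand is in the target shape, the sum over $i$ closes the argument immediately.
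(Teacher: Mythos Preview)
Your decomposition via the triangle inequality and the counting argument for the factor $(|\mathcal{U}|-1)$ are both fine, but the route you take for the first term is not the paper's, and it is precisely the source of the difficulty you flag at the end. You bound $\|L_{\tau_i}(\boldsymbol{\theta}_i)-L_{\tau_i}(\boldsymbol{\theta}_j)\|$ by invoking Lipschitzness \emph{in $\boldsymbol{\theta}$}, which is an assumption not in the statement, and which only gives a term linear in $\|\nabla_{\boldsymbol{\theta}}L_{\tau_i}\|$; you then have to patch in an artificial $\|\nabla L\|\le\|\nabla L\|^2$ step to manufacture the square.

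The paper's proof instead uses a first-order Taylor expansion of each $L_{\tau_i}$ around the meta-parameter $\boldsymbol{\theta}$:
\[
L_{\tau_i}(\boldsymbol{\theta}_i)\approx L_{\tau_i}(\boldsymbol{\theta})+\Bigl(\tfrac{\partial L_{\tau_i}}{\partial\boldsymbol{\theta}}\Bigr)^{\!\mathrm{T}}(\boldsymbol{\theta}_i-\boldsymbol{\theta}).
\]
Since $\boldsymbol{\theta}_i-\boldsymbol{\theta}=-\alpha(h_i)\nabla_{\boldsymbol{\theta}}L_{\tau_i}(\boldsymbol{\theta})$, the linear term becomes the inner product of the gradient with itself, namely $-\alpha(h_i)\,\|\nabla_{\boldsymbol{\theta}}L_{\tau_i}(\boldsymbol{\theta})\|^2$. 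That is where the \emph{square} on the gradient norm and the detached $\alpha(h_i)$ factor come from; it is not a norm-bookkeeping artifact but the structural consequence of expanding around $\boldsymbol{\theta}$ rather than comparing $\boldsymbol{\theta}_i$ to $\boldsymbol{\theta}_j$ directly. After this step, the paper applies the triangle inequality and Cauchy--Schwarz to obtain $\|\nabla_{\boldsymbol{\theta}}L_{\tau_i}\|^2\|\alpha(h_i)\|+\|\nabla_{\boldsymbol{\theta}}L_{\tau_j}\|^2\|\alpha(h_j)\|$, while the zeroth-order residual $\|L_{\tau_i}(\boldsymbol{\theta})-L_{\tau_j}(\boldsymbol{\theta})\|$ is controlled by the Lipschitz-in-embedding assumption and absorbed into $C$. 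Replacing your Lipschitz-in-$\boldsymbol{\theta}$ step with this Taylor expansion closes the gap you identified.
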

\begin{proof}
We firstly consider a simple two-agent case: $ \min ||L_{\tau_i}\left(\boldsymbol{\theta}_{i}\right) -     L_{\tau_j}\left(\boldsymbol{\theta}_{j}\right) ||$. 

By Taylor series approximation, we have
\begin{small}
\begin{align*}
|| L_{\tau_i}\left(\boldsymbol{\theta}_{i}\right) -     L_{\tau_j}\left(\boldsymbol{\theta}_{j}\right) || \approx || L_{\tau_i}(\boldsymbol{\theta}) + \frac{\partial L_{\tau_i}}{\partial\boldsymbol{\theta}}(\boldsymbol{\theta}_i-\boldsymbol{\theta})- (L_{\tau_j}(\boldsymbol{\theta}) + \frac{\partial L_{\tau_j}}{\partial\boldsymbol{\theta}}(\boldsymbol{\theta}_j -\boldsymbol{\theta}))||.
\end{align*}
\end{small}
Here\footnote{With a little abuse of the notation, we use $\frac{\partial L_{\tau_i}}{\partial\boldsymbol{\theta}}$  to replace $\frac{\partial L_{\tau_i}^\mathrm{T}}{\partial\boldsymbol{\theta}}$.}, We use the first-order Taylor series to approximate the loss function $L_{\tau_i}$ and $L_{\tau_j}$. Then, through the triangle inequality, we have:
\begin{align*}
    &|| L_{\tau_i}\left(\boldsymbol{\theta}_{i}\right) -     L_{\tau_j}\left(\boldsymbol{\theta}_{j}\right) || \\ &\leq  || \frac{\partial L_{\tau_i}}{\partial\boldsymbol{\theta}}(\boldsymbol{\theta}_i-\boldsymbol{\theta})- \frac{\partial L_{\tau_j}}{\partial\boldsymbol{\theta}}(\boldsymbol{\theta}_j -\boldsymbol{\theta})|| + ||L_{\tau_i}(\boldsymbol{\theta}) -L_{\tau_j}(\boldsymbol{\theta})|| \\
    &=  ||  \frac{\partial L_{\tau_i}}{\partial\boldsymbol{\theta}}\alpha(h_i) \nabla_{\boldsymbol{\theta}} L_{\tau_{i}}(\boldsymbol{\theta})- \frac{\partial L_{\tau_j}}{\partial\boldsymbol{\theta}}\alpha(h_j) \nabla_{\boldsymbol{\theta}} L_{\tau_{j}}(\boldsymbol{\theta})||  + ||L_{\tau_i}(\boldsymbol{\theta}) -L_{\tau_j}(\boldsymbol{\theta})|| \\ %
    &\leq ||  \frac{\partial L_{\tau_i}}{\partial\boldsymbol{\theta}}\alpha(h_i) \nabla_{\boldsymbol{\theta}} L_{\tau_{i}}(\boldsymbol{\theta})- \frac{\partial L_{\tau_j}}{\partial\boldsymbol{\theta}}\alpha(h_j) \nabla_{\boldsymbol{\theta}} L_{\tau_{j}}(\boldsymbol{\theta})||  + ||h_i|| + ||h_j||,
\end{align*}
$\!$ For the first term, we have:
\begin{align*}
   & || \frac{\partial L_{\tau_i}}{\partial\boldsymbol{\theta}}\alpha(h_j) \nabla_{\boldsymbol{\theta}} L_{\tau_{i}}(\boldsymbol{\theta})- \frac{\partial L_{\tau_j}}{\partial\boldsymbol{\theta}}\alpha(h_i) \nabla_{\boldsymbol{\theta}} L_{\tau_{j}}(\boldsymbol{\theta})|| \\
   &= || (\frac{\partial L_{\tau_{i}}}{\partial\boldsymbol{\theta}})^2\alpha(h_i)-(\frac{\partial L_{\tau_{j}}}{\partial\boldsymbol{\theta}})^2\alpha(h_j) ||\\
   & \leq||\frac{\partial L_{\tau_{i}}}{\partial\boldsymbol{\theta}}||^2|| \alpha(h_i)|| + ||\frac{\partial L_{\tau_{j}}}{\partial\boldsymbol{\theta}}||^2|| \alpha(h_j)|| 
\end{align*}
The last inequality is obtained by the Cauchy--Schwarz inequality and the triangle inequality. The inequality above is the upper bound of the distance of losses for any two users. 

Now, we use the same approach to  the multi-user case:
$
   \sum\nolimits_{u_i,u_j\in\mathcal{U};i>j}||L_{\tau_i}\left(\boldsymbol{\theta}_{i}\right) -     L_{\tau_j}\left(\boldsymbol{\theta}_{j}\right)||,
$

Following the same approach, the upper bound for multi-user case is
\begin{equation*}
   \sum\nolimits_i(|\mathcal{U}|-1)||\frac{\partial L_i}{\partial\boldsymbol{\theta}}||^2|| \alpha(h_i)||  +|\mathcal{U}|(|\mathcal{U}|-1)\max_i||h_i||. 
\end{equation*}
This inequality is drawn by the sum of the inequality for all users. The coefficient $(|\mathcal{U}|-1)$ is obtained by the fact that for each loss $L_{\tau_i}$, we need to compute $(|\mathcal{U}|-1)$ times. The coefficient $|\mathcal{U}|*(|\mathcal{U}|-1)$ is by the fact that there are $2*\binom{|\mathcal{U}|}{2}$ number of $\max_i||h_i||$ in the whole calculation. Moreover, for each $\boldsymbol{\theta}$, the right term $\max_i||h_i||$ is a constant. Therefore, we only need to focus on the left term $\sum\nolimits_i(|\mathcal{U}|-1)||\frac{\partial L}{\partial\boldsymbol{\theta}}||^2|| \alpha(h_i)||$. Setting $|\mathcal{U}|*(|\mathcal{U}|-1)\max_i||h_i|| = C$, for the gradient descent methods, the loss function and its upper bound can be written as:
\begin{align*}
    \sum\nolimits_i L_{\tau_i}\left(\boldsymbol{\theta}_{i}\right) + \sum\nolimits_{u_i,u_j\in\mathcal{U};i>j}||L_{\tau_i}\left(\boldsymbol{\theta}_{i}\right) -     L_{\tau_j}\left(\boldsymbol{\theta}_{j}\right)|| \leq  \sum\nolimits_i\left( L_{\tau_i}\left(\boldsymbol{\theta}_{i}\right) + (|\mathcal{U}|-1)||\frac{\partial L_i}{\partial\boldsymbol{\theta}}||^2|| \alpha(h_i)||\right) + C.
\end{align*}
This completes the proof.
\end{proof}
\begin{remark}
REG-PAML can be explained as an approximated proximal  regularization.
\end{remark}
Recall the optimizer in~\citep{rajeswaran2019meta}:
\begin{equation*}
    L_i(\boldsymbol{\theta}_{i}) + \frac{\lambda}{2} ||\boldsymbol{\theta}_{i} - \boldsymbol{\theta}||^2,
\end{equation*}
where $\lambda$ is a constant parameter ( the regularization strength).  If we expnad the last term $||\boldsymbol{\theta}_{i} - \boldsymbol{\theta}||^2$ by the Taylor series, we have 
\begin{equation}
||\boldsymbol{\theta}_{i} - \boldsymbol{\theta}||^2 = ||\frac{\partial L_i}{\partial\boldsymbol{\theta}}\alpha(h_i)||^2 \leq ||\frac{\partial L_i}{\partial\boldsymbol{\theta}}||^2||\alpha(h_i)||^2,
\end{equation}
The last term is similar to our regularizer.  Therefore, our method can be also explained as an approximated proximal  regularization.
\section{Kernel-based PAML}\label{tree}

This section we discuss how kernel-based method can calculate similarity. Formally speaking, when a new user $u_i$ with embedding $h_i\in \mathbb{H}$ comes, our goal is to find the users with the most similar interest  to $u_i$. We first define a (oracle) classifier as $g(h):\mathbb{H}\rightarrow \mathbb{R}$, which maps the embedding to a certain group (e.g., the fishing enthusiast group). That is, the classifier $g(h)$ gives the same output for the users with similar interest. Then, the model can find which users are similar to the new user $u_i$. However, finding this classifier is not an easy task. Inspired by the Maximum Mean Discrepancy (MMD) approach~\citep{gretton2012kernel}, which converts the problem of finding the classifier into calculating the distance of probability, we build the objective as:
\begin{equation}\label{obj}
    s:=\inf\nolimits_{j\in U} ||\sup\nolimits _{g \in \mathcal{G}} \sum\nolimits_l g(h^l_i)-g(h^l_j)||^2,
\end{equation}
where $\mathcal{G} = \{g:||g||_H \leq 1\}$ is the set of classified functions ($\vert\vert\cdot\vert\vert_H$ is the norm function in Hilbert space $H$). $h^l$ is the value for $l$-th basis ($h = ( h^l)_{l=1}^M \in \mathbb{R}^M$, $M$ is the embedding size).  $|| \sup\nolimits _{g \in \mathcal{G}}\sum\nolimits_l g(h^l_i)-g(h^l_j)||$ is exactly the MMD~\citep{gretton2012kernel}, revealing the disparity between two distributions. %
In our setting, Eq.~\eqref{obj} can be explained as finding users with similar interest to the user $i$ in the embedding space if we regard each embedding component as random variable. %

With Riesz's representation theorem~\citep{scholkopf2002learning}, we have $g(h) =\langle g, \phi(h)\rangle$, where $\phi(h)$ is the feature space map from $\mathbb{H}$ to $H$. Applying similar approach from~\citep{gretton2012kernel}, we obtain:
\begin{equation*}
        s  = \inf\nolimits_{j\in U} || \sup\nolimits _{g \in \mathcal{G}} \langle g, \phi(h_i)-\phi(h_j)\rangle||
     = \inf\nolimits_{j\in U}  || \phi(h_i)-\phi(h_j)||_{\mathcal{H}},
\end{equation*}
 where $\langle \cdot,\cdot\rangle$ is the inner product. Here, $|| \phi(h_i)-\phi(h_j)||_{\mathcal{H}} = \langle \phi(h_i)-\phi(h_j), \phi(h_i)-\phi(h_j)\rangle = \langle \phi(h_i) - \phi(h_j) \rangle^{2}_{\mathcal{H}} := \kappa\left(h_i, h_{j}\right)$. Thus, we can calculate $s = \kappa\left(h_i, h_{j}\right)$ without knowing the \emph{classifier}, which can be used to estimate the similarity.

\begin{table*}[t]
    \centering
    \small
    {\scriptsize
\begin{tabular}{c|c|c|c}
\toprule
 & \textbf{MovieLens-1M} & \textbf{BookCrossing} & \textbf{Taobao dataset}\\
\midrule
Number of the selected users & 4832
 & 7827 & 6591\\
Selected item numbers & 3883 & 185973 &45067 \\
Selected features for items & genre, direct, actor & published year, publisher & ctr, cvr, price, logctr, logcvr, logprice, ctrcvr, cvrprice, ctrcvrprice.\\
Features for users & gender, age, occupation, zipcode & state, country, age & gender, age, occupation, zipcode, state, country \\

\bottomrule
\end{tabular}
}
    \captionsetup{justification=centering}
        \caption{Statistics of MovieLens-1M, BookCrossing, and Taobao dataset.}    \label{datasets}
\end{table*}

\section{Algorithms}\label{alg}
The complete pseudo-code can be found in Algorithm~\ref{alg1}. The searching and storing methods are in Algorithm~\ref{alg2}. Recall the loss functions and the gradient descents in the main text:

\begin{equation}\label{tloss}
   \mathcal{L} = \sum\nolimits_i L_{\tau_i} +  \gamma L^r_i,
\end{equation}

\begin{align}
\boldsymbol{\theta} = \boldsymbol{\theta} - \beta \sum\nolimits_{\tau_i}\nabla_{\boldsymbol{\theta}}L_{\tau_i}(\boldsymbol{\theta}_i)\Bigg( I - \alpha(h_i;\boldsymbol{\psi})\nabla^2_{\boldsymbol{\theta}} L_{\tau_{i}}(\boldsymbol{\theta})\Bigg),\label{meta1}\\
   \boldsymbol{\psi} = \boldsymbol{\psi}  + \beta \nabla_{\boldsymbol{\psi}} L_{\tau}(\boldsymbol{\theta}) \nabla_{\boldsymbol{\psi                 }}\alpha(h_i;\boldsymbol{\psi}) L_{\tau_{i}}(\boldsymbol{\theta}).\label{meta2}
\end{align}

\section{MORE DETAILS ABOUT THE
EXPERIMENTS} \label{md_exp}
  
     \subsection{Data preprocessing step}\label{preprocess}  The data pre-processing scheme is:  1) We rank the users with the number of their log data and choose the $80\%$ users least log data as the cold-start users (the value $80\%$ is based on the Pareto Principle). 2) We randomly split the users by $7:1:2$ for training, validation, and testing. 3) We tick out the users when they either have blank or wrong features (including the ages are less than 10 or larger than 100 as well as the location features with garble) or the items they viewed are less than a threshold (two items for rating prediction and two clicked items for CTR prediction). 4) We separate the support and query sets for each user with a ratio of 80$\%$:20$\%$ randomly. %
     
     \subsection{Hyper-parameters and Network Structures}\label{Hyper-parameters} The hyper-parameters for AT-PAML and REG-PAML can be found in Table~\ref{hyper}. 
     
     For the network structures, we would like to discuss the embedding module and the learning-rate module. The embedding layer is directly built from torch.nn.Embedding module\footnote{\url{https://pytorch.org/docs/master/generated/torch.nn.Embedding.html}}. we first use embedding layer for users' features and item features respectively and then we concatenate them together. Taking the MovieLens as an example, The code is as follows:

For the users' features:
\begin{small}
\begin{python}
    """
    user embedding.
    """
        self.embedding_gender = torch.nn.Embedding(num_embeddings=self.num_gender,
            embedding_dim=self.embedding_dim)

        self.embedding_age = torch.nn.Embedding(num_embeddings=self.num_age,
            embedding_dim=self.embedding_dim)

        self.embedding_occupation = torch.nn.Embedding(num_embeddings=
        self.num_occupation,
            embedding_dim=self.embedding_dim)

        self.embedding_area = torch.nn.Embedding(num_embeddings=self.num_zipcode,
            embedding_dim=self.embedding_dim)
        self.embedding_genre = torch.nn.Embedding(num_embeddings=self.num_genre,
            embedding_dim=self.embedding_dim)
        
\end{python}

\end{small}

For the item's features,

\begin{python}
    """
    item embedding.
    """
    self.embedding_genre = torch.nn.Embedding(num_embeddings=self.num_genre, 
    embedding_dim=self.embedding_dim)
\end{python}

Similar embedding layers are used for BookCrossing.

For the production dataset, due to the large corpus of items' and users' features of real-world dataset, it is better to extract some useful features rather than putting them into the network together. Here, we use the based ranker to generate the item/user features~\citep{lin2019pareto}. Then, we directly leverage these features as the input of our models.

For the LR modules. During the experiments, we build the FC layer as:

\begin{small}
\begin{python}
    def forward(self, x):
        user_emb, item_emb = x
        user_emb_exp = user_emb.repeat(item_emb.shape[0],1)
        x = torch.cat((item_emb.float(), user_emb_exp.float()), 1)

        # for lr
        lremb = self.fc1_1(user_emb_exp.float())
        lremb = F.relu(lremb)
        lremb = self.fc1_2(lremb)
        lr_local_raw = 0.001 * torch.sigmoid(self.fc_lr(self.lremb))
\end{python}
\end{small}
which shrinks the learning rate. 

For AT-PAML, we introduce the warm-up (one epoch) stage to let the model store some embeddings and value first, the LRs in warm-up stage is 0.0005.

The \textbf{network structures} for MeLU, transfer-learning, Meta-SGD are similar to PAML except that PAML has an extra LR module.  

The \textbf{hyper-parameters} for MeLU, transfer-learning, Meta-SGD are also similar to PAML except that MeLU has an fixed inner LR ($10^{-5}$). 

We provide the Python code for the search and calculate methods to help our reader better understand our model. Some of the codes are from\footnote{\url{https://github.com/mjacar/pytorch-nec}}.
\begin{small}
\begin{python}
from pyflann import FLANN
self.kdtree = FLANN()

def search_and_calculate(self, lookup_key):
    """
    A brief description for search function and kernel calculating function.
    """
    lookup_indexes = self.kdtree.nn_index(
        lookup_key.data.cpu().numpy(), min(self.num_neighbors, len(self.keys)))[0][0]
    output = 0
    kernel_sum = 1e-5
    for i, index in enumerate(lookup_indexes):
      kernel_val = self.kernel(self.keys[int(index)], lookup_key[0])
      output += kernel_val * self.values[int(index)]
      kernel_sum += kernel_val
    output = output / kernel_sum
    return output
\end{python}
\end{small}

\begin{table}[H]
\centering
\caption{Hyper-parameters for PAML, AT-PAML and REG-PAML.  d is 5 for Movielens, 10 for BookCrossing, and 2 for production dataset.}
\label{hyper}
{
\begin{tabular}{|c|c|c|c|c|}
\toprule
\textbf{Hyper-paramter} & \textbf{AT-PAML} &\textbf{REG-PAML} &\textbf{PAML}\\
\midrule
Embedding size for each feature& 32 & 32 & 32\\
Decision layer size (from first to last) & 320,192,d & 320,192,d & 320,192,d\\
Learning rate layer size (from input to output) & 64,32,1 & 64,32,1 & 64,32,1\\
Outer learning rate $\beta$ & 5e-5 & 5e-5 &5e-6\\
Epochs & 20 & 20 & 20\\
Batch size & 32 &32 &32\\
Tree Memory size & 10000  & None& None\\
Warm up epoch &1 &None & None\\
Warm up LR &5e-6 &None & None\\
Number of neighbours for inference  & 5 & None & None\\
regularizer parameter $\gamma_0$ & None & 1e-3 & None\\
 $\delta$ &2 & None & None\\
  The number of nearest users $K$ &20 & None & None\\
\bottomrule
\end{tabular}
}
\end{table}

     \subsection{Details of datasets.} \label{dataset}
     The details of different datasets  can be found in Table~\ref{datasets}. 
    
     \subsection{Computational resources and platforms}\label{platform} We use Pytorch\footnote{\url{https://pytorch.org/}} as the deep learning library and we use the ADAM as the optimizer~\citep{kingma2014adam}. All the experiments are done on a single GeForce GTX 1080 Ti. The production dataset is obtained by the company's recommendation pipeline. The package for the two-tailed student's t-test is from\footnote{\url{https://docs.scipy.org/doc/scipy/reference/generated/scipy.stats.ttest_ind.html}}. 
     
    \subsection{Loss function}\label{loss} For the rating prediction tasks, we leverage the mean squared loss $\mathcal{L}_{\tau_i} = \frac{1}{\left|\tau_i\right|} \sum\nolimits_{j \in \tau_i}\left(y_{i j}-\hat{y}_{i j}\right)^{2}$, where $y_{i j}$ is item $j$'s rating score predicted by user $i$'s prediction model and $\hat{y}_{i j}$ is the user $i$'s actual rating score for item $j$. For the CTR prediction tasks, we take advantage of the negative entropy loss $\mathcal{L}_{\tau_i} = \frac{1}{\left|\tau_i\right|} \sum\nolimits_{j \in \tau_i}-\omega_j\hat{y}_{i j}\log y_{i j}$, where $\omega_j$ is the weight for each label to handle the imbalance issue between negative (not click) and positive (click) samples. Furthermore, for CTR prediction, since the negative samples are far more than positive samples in real-world dataset, a weighted negative entropy loss is introduced to guide the model to pay more attention to the positive samples.

\subsection{Evaluation Metrics}~\label{Metrics}
We introduce MSE, Average nDCG@K, AUC, and Average NEL as our evaluation metrics. Here are the details:
\begin{enumerate}
    \item \textbf{MSE.} MSE is calculated by \\ $\tiny \frac{1}{\left|D_{test}\right|}\sum\nolimits_{i \in D_{test}} \frac{1}{\left|\tau_i\right|}\sum\nolimits_{j \in \tau_i}\left(y_{i j}-\hat{y}_{i j}\right)^{2}$,where $D_{test}$ is the test dataset.
    \item \textbf{Average nDCG@K.} Average nDCG@K is calculated by \\$\frac{1}{|D_{test}|} \sum\nolimits_{i \in D_{test}} \frac{D C G_{K}^{i}}{I D C G_{K}^{i}}$,where $D C G_{K}^{i}=\sum\nolimits_{r=1}^{K} \frac{2^{R_{i r}}-1}{\log _{2}(1+r)}$, $R_{i r}$ are the true rating of the user $i$ for the $r$-th ranked item and $IDCG^i_K$ is the best possible $DCG^i_K$ for user $i$.
    \item \textbf{AUC.} AUC is a widely adopted metric to evaluate the performance of the classification task. We use it to validate our method in the CTR prediction task.
    \item \textbf{Average NEL.} Average NEL is defined as $\frac{\sum\nolimits_{i\in D_{test}}}{|D_{test}|} \frac{1}{\left|\tau_i\right|} \sum\nolimits_{j \in \tau_i}\allowbreak-\omega_j\hat{y}_{i j}\log y_{i j}$. We set $\omega_j$ as $0.1$ for the non-clicked item and $0.9$ for the clicked item. We also use it to validate our method in the CTR prediction task.
\end{enumerate}

\subsection{Extra Experiments and analysis}~\label{Experiments}
From Tab~\ref{abs_three}, the REG-PAML with $\gamma = 10^{-2}$ performs better than REG-PAML with $\gamma = 10^{-3}$
on average but we still choose REG-PAML with $\gamma = 10^{-3}$ as our standard method since it performs more robustly (lower variance values) than REG-PAML with $\gamma = 10^{-2}$ (0.021 v.s. 0.013). 

From Fig.~\ref{all_books} and~\ref{all_movies}, we find that both the AT-PAML and the REG-PAML have lower first quantile, mean and third quantile in MSE. Fig.~\ref{lrs2} indicates that the bigger $\gamma$ is, the more difference of LRs between the major and minor users is, revealing that the regularizer does have the ability to let the model focus on both major and minor users. From Fig. \ref{abs_vi}, we discover that different $\gamma$ values and tree-based module affects the users with mean MSE rather than users with the best or the worst MSE values. Moreover, from Fig.~\ref{abs_ls}, we find that our methods still perform well in both the minor and major users comparing with SOTAs.

Moreover, from Fig.~\ref{tsne}, we can find that our methods do have the ability to distinguish the minor users and the major users. Although there are some mistakes, this is because the current criteria for finding the minor users may not be accurate enough.

Fig.~\ref{fail} is a failure case, which reveals that when the model fails to perform well, it is also unable to distinguish the major and minor users. This indicates that the adaptive learning rate does have a connection with the performance of both major and minor users.

\begin{figure}[t]
\vspace{-0.2cm}
\subfigure[The local enlarged box plot for the distribution of MSE for each user in the test dataset in BookCrossing with filter.]{
\includegraphics[width=0.45\textwidth]{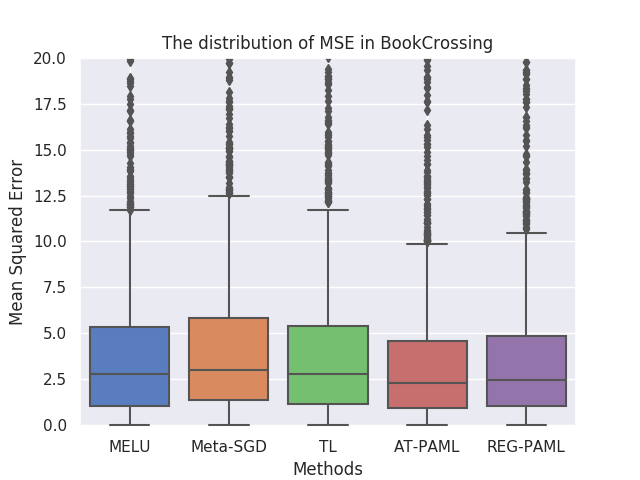}  %

}
\hfill 
\subfigure[The original box plot for the distribution of MSE for each user in the test dataset in BookCrossing.]{
\includegraphics[width=0.45\textwidth]{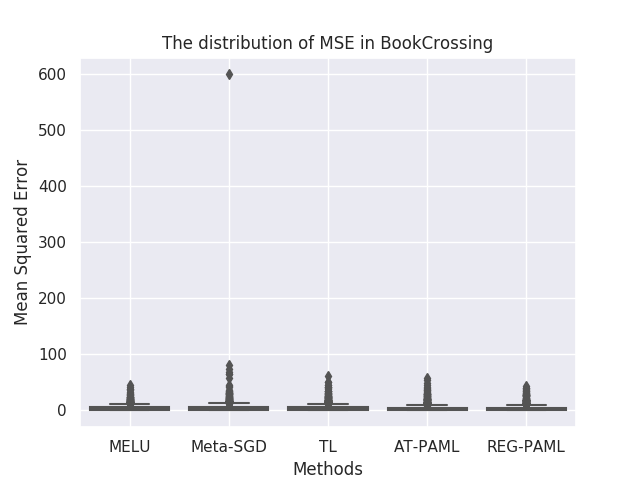} 

}
\caption{The Statistical analysis of different methods in BookCrossing.}\label{all_books}
\end{figure}

\begin{figure}[t]
\vspace{-0.2cm}
\subfigure[The local enlarged box plot for the distribution of MSE for each user in the test dataset in MovieLens.]{
\includegraphics[width=0.45\textwidth]{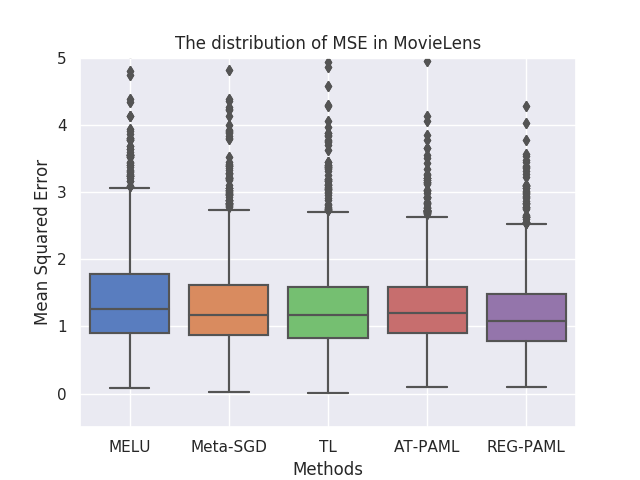}  %

}
\hfill 
\subfigure[The original box plot for the distribution of MSE for each user in the test dataset in MovieLens.]{
\includegraphics[width=0.45\textwidth]{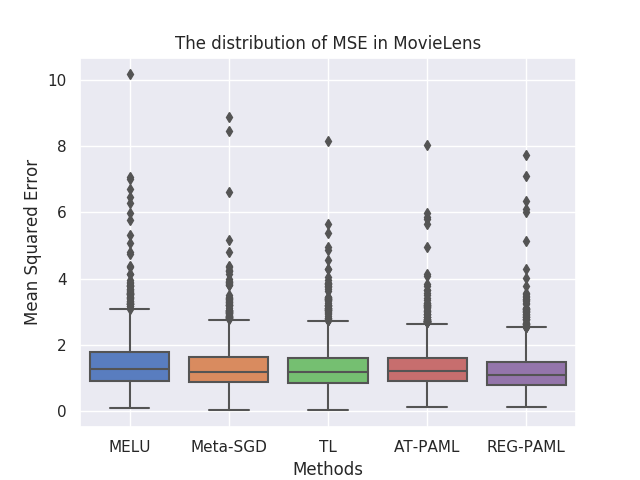} 

}
\caption{The Statistical analysis of different methods in MovieLens.}\label{all_movies}
\end{figure}

\begin{figure*}[tb!]
\centering
\subfigure[The distribution of minor and major user LRs for REG-PAML ($\gamma = 10^{-3}$) in Movielen.]{
\includegraphics[width=0.45\textwidth]{supple/sgd1e3_lrs_all.png}}\hfill
\subfigure[The distribution of LRs for REG-PAML ($\gamma = 10^{-3}$) in MovieLens.]{
\includegraphics[width=0.40\textwidth]{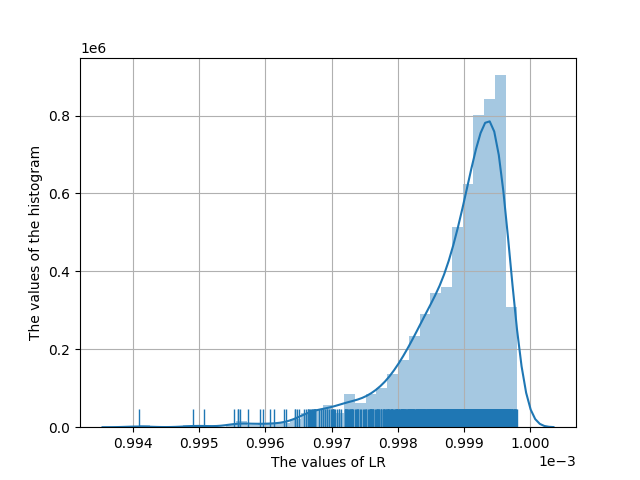}}
\hfill
\subfigure[The distribution of minor and major user LRs for REG-PAML ($\gamma = 10^{-5}$) in MovieLens.]{
\includegraphics[width=0.45\textwidth]{supple/sgd1e5_lrs_all.png}}\hfill 
\subfigure[The distribution of LRs for REG-PAML ($\gamma = 10^{-5}$) in MovieLens.]{
\includegraphics[width=0.45\textwidth]{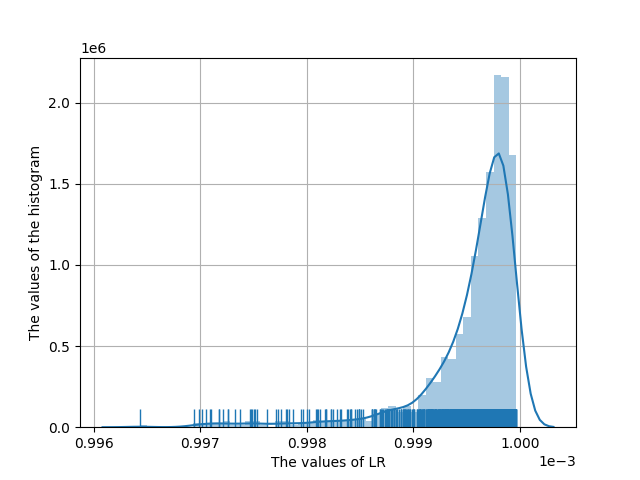}}\hfill 
\subfigure[The distribution of minor and major user LRs for PAML in MovieLens.]{
\includegraphics[width=0.45\textwidth]{supple/paml_lrs_all.png}}\hfill 
\subfigure[The distribution of LRs for PAML in MovieLens.]{
\includegraphics[width=0.45\textwidth]{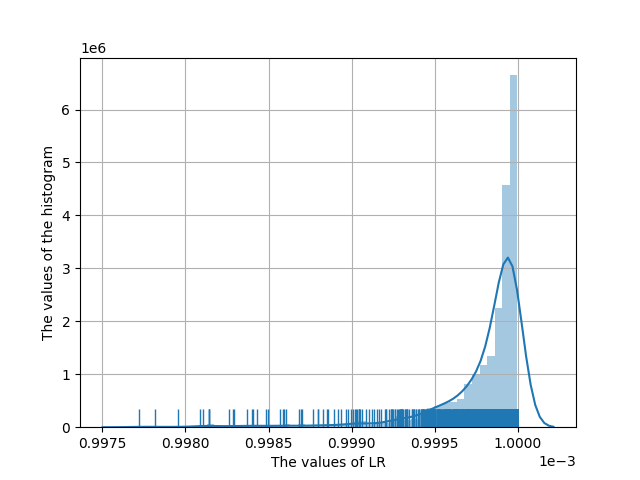}}\hfill

\caption{The visualization of LRs for REG-PAML ($\gamma = 10^{-3}$), REG-PAML ($\gamma = 10^{-5}$), and PAML. } \label{lrs2}
\end{figure*}

\begin{figure*}[tb!]
\centering
\subfigure[The distribution of LRs for AT-PAML in Movielens.]{
\includegraphics[width=0.45\textwidth]{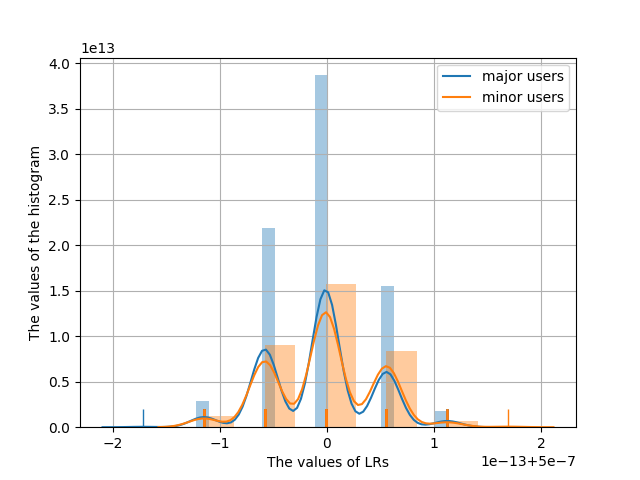}}\hfill
\subfigure[The distribution of minor and major user LRs for AT-PAML in MovieLens.]{
\includegraphics[width=0.45\textwidth]{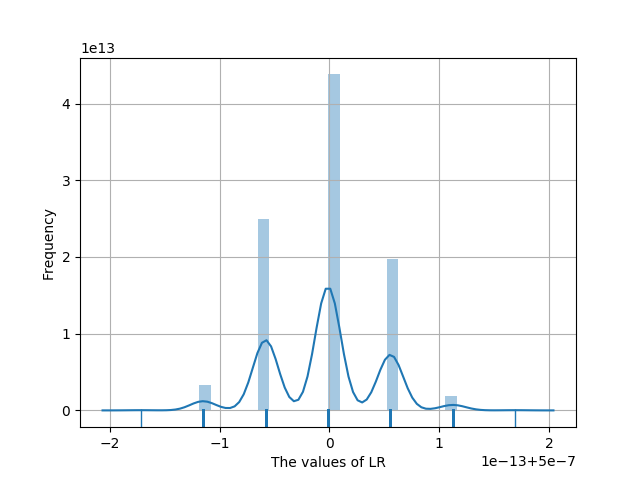}}\hfill
    \vspace{-0.1in}
\caption{A failure case. In the failure case, except for the larger (outer) learning rate ($5e-4$), other hyper-parameters are the same as AT-PAML. The results are: MSE: $2.127$, major MSE: $1.477$, minor MSE: $3.148$, nDCG@3: $0.782$, nDCG@5: $0.792$.}\label{fail}
\end{figure*}

\begin{figure*}[tb!]
\centering
\subfigure[The violin box of the MSEs for different methods in test dataset.]{
\includegraphics[width=0.45\textwidth]{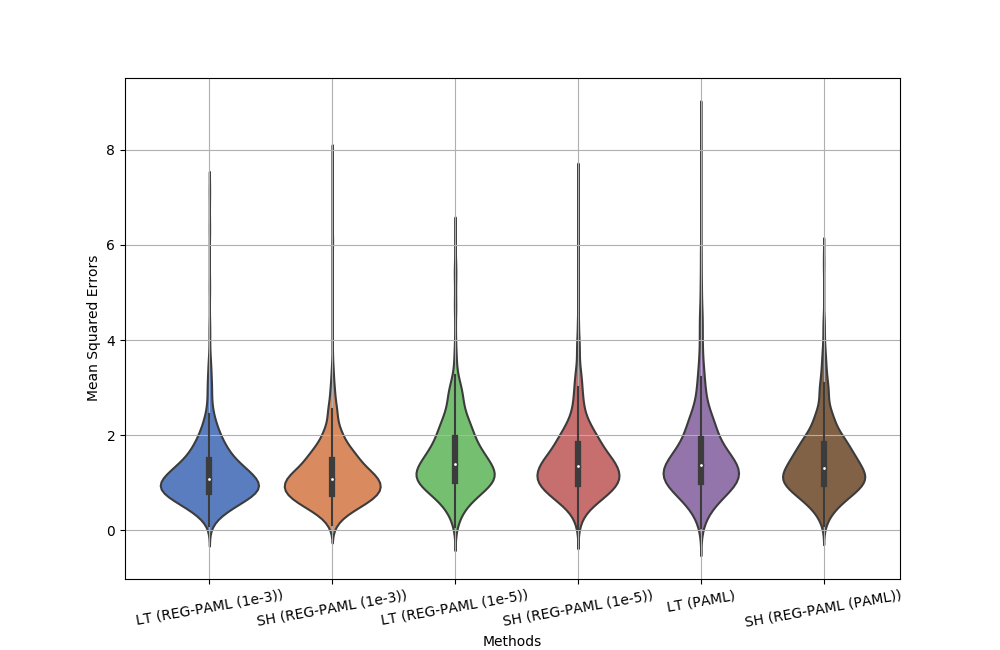}}\hfill
\subfigure[The local enlarged violin box of the MSEs for different methods in test dataset.]{
\includegraphics[width=0.45\textwidth]{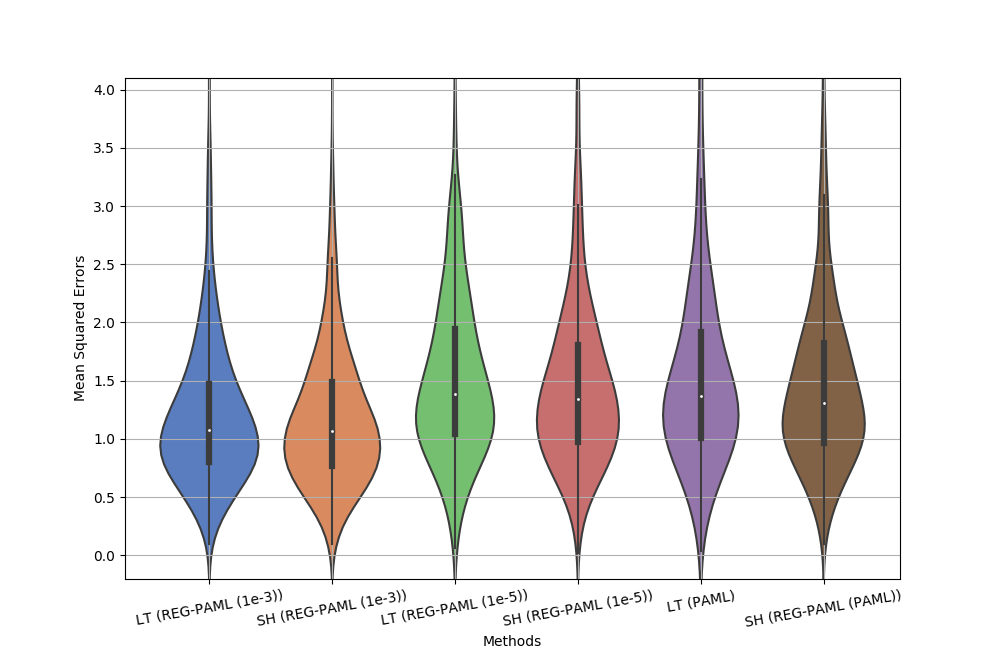}}\hfill
\caption{The performance and statistics of ablation methods in MovieLens. LT means the minor users, SH is the major users, and TL is the abbreviation for transfer learning. Higher is better. }\label{abs_vi}
\end{figure*}

\begin{figure*}[tb!]
\centering
\subfigure[The box plot of the MSEs for different methods in test dataset. ]{
\includegraphics[width=0.45\textwidth]{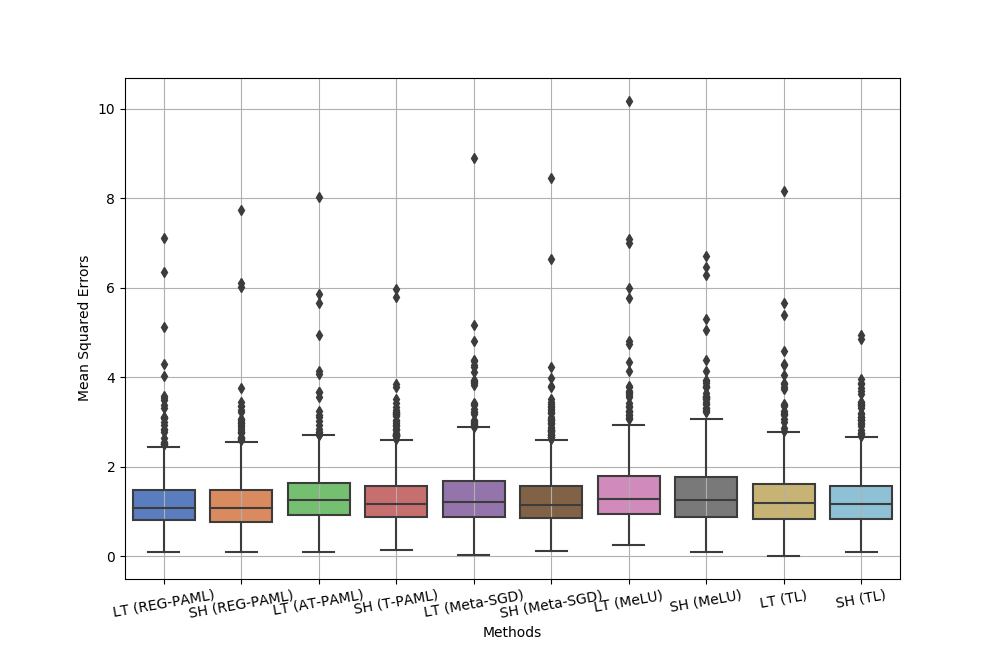}}\hfill
\subfigure[The local enlarged violin box of the MSEs for different methods in test dataset.]{
\includegraphics[width=0.45\textwidth]{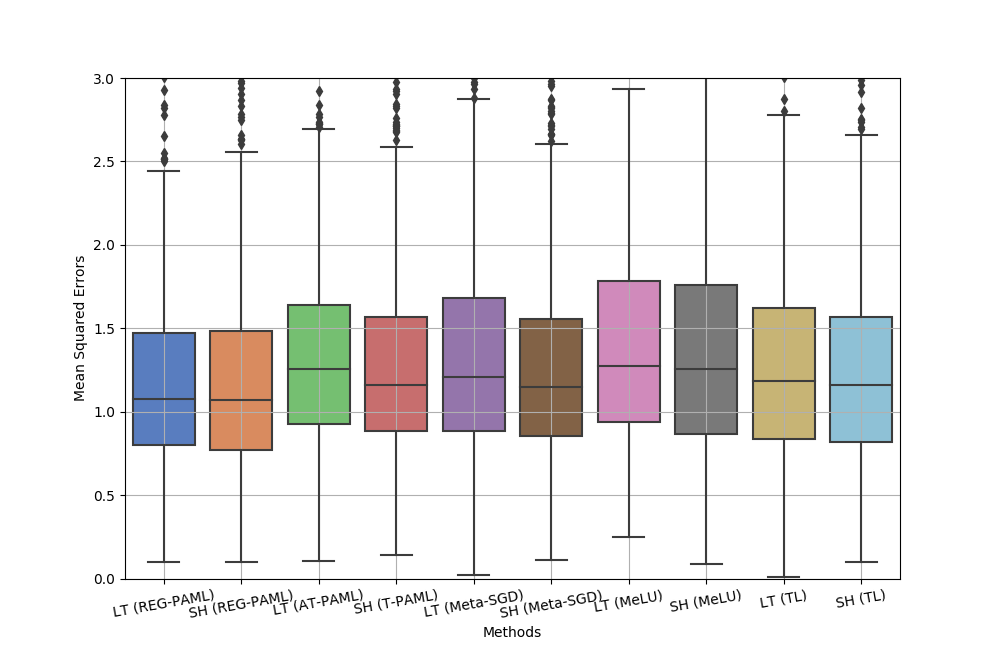}}\hfill
\caption{The performance and statistics of different methods in MovieLens. LT means the minor users, SH is the major users, and TL is the abbreviation for transfer learning. Higher is better.}\label{abs_ls}
\end{figure*}

\begin{figure*}[tb!]
\centering
\subfigure[The t-SNE visualization of REG-PAML ($\gamma = 10^{-3}$). ]{
\includegraphics[width=0.45\textwidth]{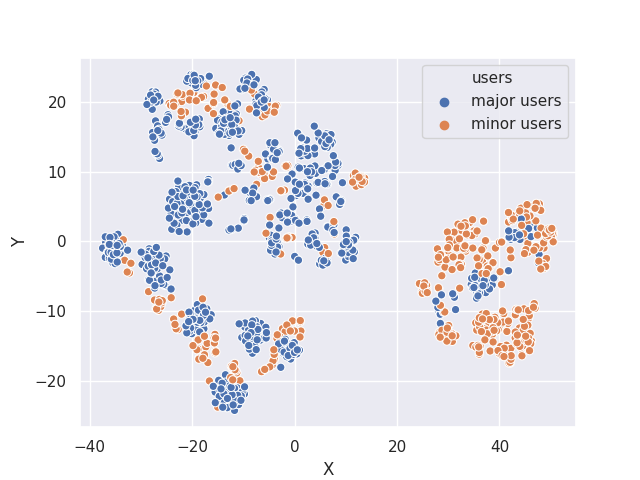}}\hfill 
\subfigure[The t-SNE visualization of PAML.]{
\includegraphics[width=0.45\textwidth]{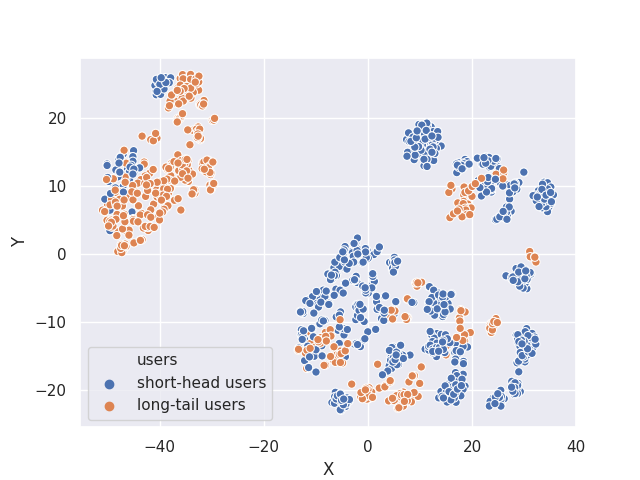}}\hfill 
\caption{The t-SNE visualization of user embeddings in MovieLens. X and Y are different dimensions. The blue dots are for major users while the orange dots are for minor users.}\label{tsne}
\end{figure*}

\end{document}